\documentclass[aps,prx,reprint,superscriptaddress,nofootinbib,longbibliography,floatfix]{revtex4-2}

\usepackage[T1]{fontenc}
\usepackage[utf8]{inputenc}
\usepackage{amsmath,amssymb,amsthm}
\usepackage{bm}
\usepackage{graphicx}
\graphicspath{{}{manuscript/}}
\usepackage{framed}
\usepackage[colorlinks=true,linkcolor=blue,citecolor=blue,urlcolor=blue]{hyperref}

\newtheorem{theorem}{Theorem}
\newtheorem{definition}{Definition}
\newtheorem{proposition}{Proposition}
\newtheorem{corollary}{Corollary}

\newcommand{\Tr}{\operatorname{Tr}}

\newcommand{\Gsum}{G_{\mathrm{sum}}}

\begin{document}

\title{Operational Tube-Sector Theory of Quantum State Distinguishability Under Generalized Symmetries}

\author{Song He}
\email{hesong@nbu.edu.cn}
\affiliation{
 Institute of Fundamental Physics and Quantum Technology, and School of
 Physical Science and Technology, Ningbo University, Ningbo,
 Zhejiang 315211, China
}
%\affiliation{
% Max Planck Institute for Gravitational Physics
% (Albert Einstein Institute), Am M\"uhlenberg 1, 14476 Potsdam, Germany
%}

\date{\today}

\begin{abstract}
A variational principle for quantum-state distinguishability is established in
many-body systems with generalized symmetries, including noninvertible cases
described by fusion categories. Standard fidelity and symmetry-resolved
diagnostics emerge as coarse-grained limits of a more refined operational
structure. When symmetry actions terminate at entanglement cuts,
distinguishability is governed by boundary tube algebras within a
symmetry-constrained measurement resource theory. The physically admissible
instruments are characterized by complete positivity, entanglement-cut
locality, boundary-module covariance, and sequential stability. The resulting
optimal measurement structure is uniquely fixed by the center of the boundary
tube algebra,
\(\mathcal{A}_{\mathrm{phys}} =
Z\!\left(\mathrm{Tube}_{\mathcal{C}}(\mathcal{M}_A)\right)\),
whose primitive idempotents define tube-sector probabilities that refine
fidelity-based and symmetry-resolved descriptions. The associated tube
positive-operator-valued measures (POVMs) are extremal and yield optimal
one-shot hypothesis-testing distinguishability under symmetry constraints.
The categorical selection rule is universal across fusion categories and
independent of microscopic realization; microscopic models enter by choosing
the boundary module and the protocol family used to witness a nontrivial tube
fiber.
\end{abstract}

\maketitle

\section{Introduction}

Quantum-state distinguishability is a central concept in quantum information
and many-body physics.  It is commonly quantified by fidelity or related
overlap measures, which compress the comparison of two states to a scalar
number~\cite{Uhlmann:1976,Jozsa:1994,NielsenChuang:2000}.  This scalar
description is useful, but it is not a complete description of operational
distinguishability: state-discrimination theory and the Fuchs--van de Graaf
bounds relate fidelity to, rather than replace, measurement-dependent
distinguishability~\cite{Helstrom:1976,FuchsGraaf:1999}.  In many-body
settings one can therefore ask not only how much two reduced states overlap,
but where that overlap is located in physically accessible sectors.

A standard refinement arises in the presence of ordinary symmetries.  When
subsystem data decompose into charge or representation sectors, entanglement
and related quantities can be resolved sector by sector.  This is the
organizing idea behind symmetry-resolved entanglement
~\cite{Laflorencie:2014,Goldstein:2017bua,Xavier:2018kqb}, with extensions to
more general charges, non-Abelian symmetries, and higher-symmetry settings
~\cite{Capizzi:2022,Bianchi:2024nonabelian,Benini:2025asym,Benini:2026higherform}, as well as
categorical and noninvertible variants in conformal field theory (CFT)
~\cite{SauraBastida:2024qoc,Heymann:2024xxx,Bhattacharyya:2025cat}.  Such a
resolution is already finer than scalar overlap diagnostics, but it is still based on
labels that behave like ordinary symmetry sectors.

The missing ingredient is operational, not merely algebraic.  A generalized
symmetry can end at an entanglement cut, and the endpoint degrees of freedom
can be invisible to any scalar or ordinary charge-resolved diagnostic while
remaining accessible to a cut-local measurement.  Thus the central question is
which classical records are physically admissible at the cut, and whether this
admissible record is uniquely fixed by symmetry, locality, and positivity.
Answering this question produces a measurement principle rather than another
post hoc diagnostic: the refined outcomes are selected before choosing a
model-specific observable.

Generalized global symmetries enlarge this picture.  They are represented by
topological operators~\cite{Gaiotto:2014kfa,Luo:2023ive}; for noninvertible
symmetries these operators form fusion categories rather than groups
~\cite{Shao:2023gho,SchaferNameki:2023jdn,Bhardwaj:2023wzd} and can act locally
through quantum operations~\cite{Okada:2024}.  A key physical
feature is that defect lines may end on boundaries or entanglement cuts.  In
boundary CFT, rational conformal field theory (RCFT) defect theory, and
lattice defect constructions, such
endpoints carry boundary-module data that is not captured by ordinary group
labels~\cite{Cardy:1989ir,Fuchs:2002cm,Frohlich:2004ef,Ostrik:2003,KitaevKong:2012,Aasen:2016dop}.  Recent
boundary-tube analyses make this point sharper: the entangling cut has its own
allowed endpoint structure, and this structure can refine the sector labels
available to a measurement~\cite{Choi:2024rzz,Choi:2024ymp}.

In this work we study how this boundary-induced structure affects
distinguishability of reduced quantum states.  The basic object is the positive
overlap operator
\begin{equation}
  O_A(\rho,\sigma)=\sqrt{\rho_A}\,\sigma_A\,\sqrt{\rho_A},
  \label{eq:positive-overlap-prx}
\end{equation}
where \(\rho_A\) and \(\sigma_A\) are reduced density matrices on a subsystem
\(A\).  We do not assume that \(O_A\) itself is a directly measured observable.
The operational quantities are the positive functionals
\(\Tr(P_\alpha O_A)\), estimated by replica, swap, or tensor-network protocols
for a chosen sector readout.  Rather than introducing a new entropy as the
primary object, we ask which such sector readouts are physically meaningful at
the entanglement cut.

Our answer is a constrained hypothesis-testing statement.  The full boundary
tube algebra is an established categorical object; we do not claim it as a new
algebraic structure.  The physical question is which readout maximizes
one-shot discrimination of reduced positive-overlap statistics subject to
positivity, complete positivity of the implementing instrument, locality at
the entanglement cut, boundary-module covariance, and stability under
sequential instruments.
These requirements define an admissible readout cone, and the solution of this
variational problem is the commutative algebra associated with the center of
the boundary tube algebra.  Its primitive outcomes give a tube-resolved overlap
distribution
\[
  p_\alpha=\frac{\Tr(P_\alpha O_A)}{\Tr O_A},
\]
where \(P_\alpha\) is the projector associated with tube-sector label
\(\alpha\).

The paper turns this observation into three precise statements.  First, within
the admissible sector-readout cone defined below, for a
fixed finite semisimple boundary module, the physically admissible classical
readout of the positive-overlap statistics is the maximal commutative algebra
\(\mathcal A_{\rm phys}=Z({\rm Tube}_{\mathcal C}(\mathcal M_A))\), selected by
positivity, locality, boundary-module covariance, and stability under
sequential composition.  Equivalently, it is the variational optimum of the
restricted discrimination functional
\(\mathcal F_{\mathcal P}(\mathcal A)\) over all admissible sector readouts,
with uniqueness for tube-separating protocol families.  Second, scalar overlap and conventional
symmetry-resolved diagnostics are successive pushforwards of the corresponding
tube-sector distribution; they cannot reconstruct conditional probability
inside a forgotten tube fiber.  When the boundary module has a nontrivial
noninvertible tube fiber, this readout hierarchy is strict.  Third, if two
protocols agree after such a pushforward but differ in the conditional tube
distribution, then the difference is operationally visible as a sector-readout
hypothesis-testing advantage; any purported noncentral refinement fails at
least one admissibility constraint.  We demonstrate this mechanism in a
doubled-Ising model with a product Kramers--Wannier (product-KW) defect, where
boundary endpoints split a symmetry-trivial sector into two product-KW tube
sectors.  In this example, an unread endpoint
record changes the tube distribution without changing the scalar or
\(G=\mathbb Z_2\times\mathbb Z_2\) resolved data.  The operational claim is therefore
not that tube sectors are new superselection sectors, but that the tube center
has a distinguished operational role as the extremal measurement algebra of
reduced positive-overlap statistics.

\paragraph*{Master notation.}
The letters \(\rho,\sigma\in\mathcal D(\mathcal H)\) denote the two global
states being compared, while
\(\rho_A=\Tr_{\bar A}\rho\) and \(\sigma_A=\Tr_{\bar A}\sigma\) denote their
reductions to subsystem \(A\).  The positive-overlap operator is
\(O_A=\sqrt{\rho_A}\sigma_A\sqrt{\rho_A}\); throughout it is a positive
functional input, not a directly measured observable.  The admissible quantum
instruments are denoted by \(\mathfrak C_{\rm sym}\); they are cut-local,
quantum instruments whose outcome maps are completely positive and whose summed
channel is trace preserving, covariant under the boundary-module symmetry
action.  The induced commutative sector-readout algebras form the admissible
readout cone \(\mathfrak M_{\rm adm}\).  Its
maximal physical element is
\(\mathcal A_{\rm phys}=Z({\rm Tube}_{\mathcal C}(\mathcal M_A))\), the center
of the boundary tube algebra associated with the generalized symmetry category
\(\mathcal C\) and the entangling-cut boundary module \(\mathcal M_A\).
Tube-sector labels are \(\alpha\); coarse symmetry labels are \(q\).  The set
\(\mathcal T=\{\alpha\}\) is the label set of the primitive central
idempotents of the boundary tube algebra, i.e. the finest
symmetry-compatible tube readout for the chosen entangling-cut boundary
module.  The set \(\mathcal Q=\{q\}\) is the label set of a chosen coarser
readout, for example ordinary group-character labels.  The map
\(\pi:\mathcal T\to\mathcal Q\) is the forgetful coarse-graining map from tube
labels to coarse labels, and \(\pi_*\) is its stochastic pushforward on
probability distributions.  The
projector onto tube sector \(\alpha\in\mathcal T\) is
\(P_\alpha=\mathsf R_A(e_\alpha)\), where \(e_\alpha\) is a primitive central
idempotent and \(\mathsf R_A\) is the subsystem representation of boundary-tube
algebra elements and boundary morphisms on the Hilbert spaces associated with
\(A\).  The tube distribution is
\(p_\alpha=\Tr(P_\alpha O_A)/\Tr O_A\), the coarse distribution is
\(p_q=\sum_{\alpha\in\pi^{-1}(q)}p_\alpha\), and
\(p_{\alpha|q}=p_\alpha/p_q\) for \(p_q>0\).  For a readout algebra
\(\mathcal A\), \(D_{\rm TV}^{\mathcal A}\) denotes the total-variation (TV)
distance between its classical output distributions, and
\(\mathcal F_{\mathcal P}(\mathcal A)\) denotes the corresponding optimized
one-shot discrimination functional over a protocol family \(\mathcal P\).
Moment indices are even integers \(m=2k\), while the R\'enyi order is denoted
by \(n\).  For a finite group \(G\), \(\widehat G\) denotes its character
labels.  In the Ising example, \(1,\psi,\sigma\) denote Ising topological
sectors or primaries, and this use of \(\sigma\) is distinct from the density
matrix \(\sigma_A\).
Detailed conventions, higher moments, R\'enyi
extensions, CFT checks, and data provenance are given in the appendices.

\section{Positive Overlap and Canonical Tube Readout}

The tube sectors are defined by the entangling-cut boundary module, not by an
arbitrary basis of defect operators.  Let \(\mathcal C\) be the generalized
symmetry category and let \(\mathcal M_A\) be the boundary module selected by
the cut.  The action of \(\mathcal C\) on \(\mathcal M_A\) defines a boundary
tube algebra \({\rm Tube}_{\mathcal C}(\mathcal M_A)\)
~\cite{Etingof:2005,Ostrik:2003,KitaevKong:2012}.  The center of this
semisimple algebra is the maximal commutative algebra of sector labels that is
stable under the boundary-module action.  Its primitive central idempotents
\(e_\alpha\) are therefore the canonical symmetry-compatible outcomes,
following the standard Ocneanu tube-algebra construction of sector labels and
its tensor-categorical quantum-double/center formulation
~\cite{Ocneanu:1988,EvansKawahigashi:1998,Mueger:2003,Kitaev:2006,Bultinck:2017mnp}.
Let \(\mathsf R_A\) denote the representation of this tube algebra on the
subsystem Hilbert space.  Representing the idempotents gives the projectors
\(P_\alpha=\mathsf R_A(e_\alpha)\).

This formulation fixes what is, and is not, unique.  The primitive central
idempotents are unique up to equivalence, including Morita-equivalent
presentations of the same physical boundary module; a microscopic change of
tube basis only conjugates their Hilbert-space representation and leaves the
traces below invariant.  A different entangling-cut boundary condition can
change \(\mathcal M_A\), and hence can change the tube resolution.  That
dependence is physical boundary data, not a gauge choice.
If the symmetry is enlarged from \(\mathcal C\) to a category
\(\mathcal C'\), restriction from
\({\rm Tube}_{\mathcal C'}(\mathcal M_A)\) to
\({\rm Tube}_{\mathcal C}(\mathcal M_A)\) gives a pushforward of sector
labels; the enlarged tube distribution refines the old one rather than
contradicting it.  For an ordinary finite group the construction reduces to
the familiar charge or character projectors.  For a noninvertible symmetry,
the idempotents are boundary-tube idempotents rather than naive Fourier
transforms of defect insertions.  Thus the tube distribution is a categorical
invariant of the chosen cut and symmetry, while a group-only distribution is
one of its possible quotients.
This is the precise sense in which the categorical input is standard but the
observable question is new: we do not introduce new tube sectors, but ask how
the positive overlap probability measure changes under the quotient from tube
labels to the smaller measurement algebra available to scalar or ordinary
symmetry-resolved probes.

For a finite commutative sector-readout algebra \(\mathcal A\) with primitive
projectors \(\{E_x\}\), define the associated classical readout channel on the
positive-overlap statistics by
\begin{equation}
  \mathcal M_{\mathcal A}(O_A)_x
  =
  \frac{\Tr(E_x O_A)}{\Tr O_A}.
  \label{eq:readout-channel}
\end{equation}
For two protocols \(\lambda,\lambda'\), the one-shot success probability under
this restricted readout is
\begin{equation}
  P_{\rm succ}^{\mathcal A}(\lambda,\lambda')
  =
  \frac12\!\left[
  1+
  D_{\rm TV}\!\left(
  \mathcal M_{\mathcal A}(O_A^\lambda),
  \mathcal M_{\mathcal A}(O_A^{\lambda'})
  \right)\right].
  \label{eq:restricted-success}
\end{equation}
For a specified family \(\mathcal P\) of admissible preparation or endpoint
protocols, we define the restricted discrimination functional
\begin{equation}
  \mathcal F_{\mathcal P}(\mathcal A)
  =
  \sup_{\lambda,\lambda'\in\mathcal P}
  P_{\rm succ}^{\mathcal A}(\lambda,\lambda') .
  \label{eq:restricted-functional}
\end{equation}
This is the variational object optimized below.  The supremum formulation
separates the physical measurement principle from the particular pair of
states used in the Ising benchmark.  We call \(\mathcal P\) tube-separating if,
for every nontrivial tube fiber, it contains a pair of protocols whose
pushforward statistics agree but whose conditional tube distributions differ.
The functional is monotone under readout refinement and nonincreasing under
classical stochastic postprocessing.  The admissible cone is also closed under
classical randomization of instruments; retaining the randomization record gives
a direct-sum readout, while forgetting it is again a postprocessing.  Thus the
optimization reduces to extremal classical sector records rather than arbitrary
Hilbert-space bases.
The optimization is not over all positive-operator-valued measures (POVMs) on
\(\mathcal H_A\), which would erase the symmetry restriction, but over the admissible cone
\(\mathfrak M_{\rm adm}\) of positive readouts implemented by quantum
instruments whose outcome maps are completely positive and whose summed channel
is trace preserving, local at the entangling cut, covariant under the
boundary-module action, and stable under sequential instruments and classical
postprocessing.

We use the term ``resource theory'' in the standard operational sense: a
restricted set of operations is declared free, and access beyond that set is a
resource only insofar as it improves an operational task, here one-shot state
discrimination~\cite{Coecke:2014rsc,Chitambar:2018eyp}.  Resource theories of
measurements and measurement incompatibility provide the general
positive-operator-valued-measure (POVM)-level
language for this viewpoint~\cite{Oszmaniec:2019qme,Guff:2019rtm,Buscemi:2019qip,Guhne:2021ime},
while superselection and reference-frame restrictions motivate the use of
symmetry-constrained free operations in quantum information~\cite{Bartlett:2006rf}.
In contrast to a general measurement-resource theory, the free class below is
not chosen abstractly: it is fixed by boundary locality and
symmetry-module covariance at the entanglement cut.

\begin{definition}[Tube measurement resource theory]
\label{def:tube-resource-theory}
The admissible instruments form a resource theory.  The allowed elements of
\(\mathfrak C_{\rm sym}\) are quantum instruments whose outcome maps are
completely positive and whose summed channel is trace preserving; their
classical records are cut-local and covariant under the boundary-module action.
Free operations are classical randomization, relabeling, and
stochastic postprocessing of sector records, including the forgetful maps
\(\alpha\mapsto q=\pi(\alpha)\).  For a fixed coarse map
\(\pi:\mathcal T\to\mathcal Q\), the free pairs are those whose tube
distributions differ only through the coarse pushforward; the resource is the
conditional tube information
\begin{equation}
  \mathcal R_\pi(p,p')
  =
  D_{\rm TV}(p,p')
  -
  D_{\rm TV}(\pi_*p,\pi_*p')
  \ge 0 .
  \label{eq:tube-resource-monotone}
\end{equation}
\(\mathcal R_\pi\) is nonnegative by data processing and measures the loss of
binary total-variation distinguishability caused by the coarse map.  It
vanishes for a given pair when \(\pi\) preserves the total-variation distance
for that pair; when the coarse distributions themselves coincide, this reduces
to equality of the tube distributions on each fiber.
Within this restricted readout theory, the free operations form a closed convex
monoid under composition, classical randomization, and stochastic
postprocessing.
\end{definition}

The tube-sector POVM is therefore not introduced as a new algebraic label, but
as the maximal readout compatible with the physical restrictions defining
\(\mathfrak C_{\rm sym}\) and \(\mathfrak M_{\rm adm}\).

\begin{theorem}[Fully variational closure principle]
\label{prop:canonical-tube-readout}
Fix \(\mathcal M_A\) and assume the boundary tube algebra is finite
semisimple.  Consider the cone \(\mathfrak M_{\rm adm}\) of classical sector
readouts of reduced positive-overlap statistics whose boundary-local symmetry
content is represented by
\({\rm Tube}_{\mathcal C}(\mathcal M_A)\), and that are positive, implemented by
quantum instruments whose outcome maps are completely positive and whose summed
channel is trace preserving, local at the entangling cut, covariant under the
boundary-module action, and stable under sequential composition of instruments.
In the classical-postprocessing
preorder, the unique maximal commutative measurement algebra satisfying these
constraints is
\begin{equation}
  \mathcal A_{\rm phys}
  \simeq
  Z\!\left({\rm Tube}_{\mathcal C}(\mathcal M_A)\right).
  \label{eq:aphys-tube-center}
\end{equation}
Equivalently, every admissible sector readout is a coarse graining of the
primitive central idempotents of
\({\rm Tube}_{\mathcal C}(\mathcal M_A)\).  Hence, for every pair of
protocols,
\begin{equation}
  P_{\rm succ}^{\mathcal A}(\lambda,\lambda')
  \le
  P_{\rm succ}^{\mathcal A_{\rm phys}}(\lambda,\lambda')
  \qquad
  \forall\,\mathcal A\in\mathfrak M_{\rm adm}.
  \label{eq:pointwise-readout-optimality}
\end{equation}
Consequently,
\begin{equation}
  \mathcal A_{\rm phys}
  \in
  \arg\max_{\mathcal A\in\mathfrak M_{\rm adm}}
  \mathcal F_{\mathcal P}(\mathcal A)
  \label{eq:variational-argmax}
\end{equation}
for any protocol family \(\mathcal P\).  The optimization is closed in the
following sense.  Every admissible instrument factors through
\(\mathcal A_{\rm phys}\) followed by a free postprocessing map:
\begin{equation}
  \begin{gathered}
  \forall\,\mathcal A\in\mathfrak M_{\rm adm},\quad
  \exists\,\mathsf R_{\mathcal A}\in{\rm FreeOps},\\
  \mathcal M_{\mathcal A}
  =
  \mathsf R_{\mathcal A}\circ
  \mathcal M_{\mathcal A_{\rm phys}} .
  \end{gathered}
  \label{eq:factor-through-aphys}
\end{equation}
Here \({\rm FreeOps}\) denotes the classical randomization, relabeling, and
stochastic postprocessing maps of
Definition~\ref{def:tube-resource-theory}.  If
\(\mathcal A_1\prec\mathcal A_2\preceq\mathcal A_{\rm phys}\) in the
postprocessing preorder and \(\mathcal P\) contains a pair separated by
\(\mathcal A_2\) but not by \(\mathcal A_1\), then
\[
  \mathcal F_{\mathcal P}(\mathcal A_1)
  <
  \mathcal F_{\mathcal P}(\mathcal A_2).
\]
In particular, for tube-separating \(\mathcal P\), the maximizer is
order-theoretically unique up to null outcomes, relabeling, and
Morita-equivalent presentations of the same boundary module.
\end{theorem}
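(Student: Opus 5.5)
The plan is to reduce every claim of the theorem to one algebraic fact. That fact is that the outcome projectors of any admissible readout lie in the commutant-and-algebra structure generated by \(\mathsf R_A({\rm Tube}_{\mathcal C}(\mathcal M_A))\), and that commutativity together with covariance forces them into its center.

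\textbf{Step 1: admissible effects are tube-algebra elements.} Let \(\mathcal A\in\mathfrak M_{\rm adm}\) have primitive projectors \(\{E_x\}\). Cut-locality and the hypothesis that the boundary-local symmetry content is represented by the tube algebra will be read as \(E_x\in\mathsf R_A({\rm Tube}_{\mathcal C}(\mathcal M_A))\). Covariance under the boundary-module action means that conjugation by every tube generator \(\mathsf R_A(t)\) preserves the record up to relabeling. Sequential stability means the record is invariant under a preceding tube instrument, so it cannot depend on which intermediate module channel was traversed.

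\textbf{Step 2: commutativity and covariance force centrality.} I would show that each \(E_x\) commutes with \(\mathsf R_A(t)\) for every \(t\). Let \(t\) be a tube generator \(\mathcal M_A\to\mathcal M_A\) through a defect \(a\in\mathcal C\). Sequential stability, applied to an instrument implementing \(t\) followed by the \(\mathcal A\)-readout, should give \(E_x\mathsf R_A(t)=\mathsf R_A(t)E_x\). The reason is that otherwise the joint record would depend on the order of operations, which violates stability under sequential composition. I expect this step to be the main obstacle: turning the verbal constraint ``covariant and sequentially stable'' into this commutation identity requires fixing precisely what the constraint means. I would first adopt the strict reading, in which the outcome-\(x\) map \(\mathcal E_x\) satisfies \(\mathcal E_x\circ{\rm Ad}_t={\rm Ad}_t\circ\mathcal E_x\), and derive the identity from that.

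Granting Step 2, the \(E_x\) lie in \(\mathsf R_A(Z({\rm Tube}))\). Because the tube algebra is finite semisimple, Wedderburn's theorem makes \(Z\) isomorphic to \(\mathbb C^{\mathcal T}\), spanned by the primitive central idempotents \(e_\alpha\). Hence each \(E_x=\sum_{\alpha\in S_x}P_\alpha\), where \(\{S_x\}\) is a partition of \(\mathcal T\), with null outcomes allowed. This is the claim that every admissible readout coarse-grains the \(e_\alpha\). Conversely, \(\{P_\alpha\}\) is itself admissible: it is realized by the Lüders instrument \(X\mapsto P_\alpha XP_\alpha\), which is CP and trace preserving in sum, cut-local, and commutes with all \(\mathsf R_A(t)\). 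Hence \(\mathcal A_{\rm phys}\in\mathfrak M_{\rm adm}\) and it is maximal. It is also unique, because any other maximal element would be a coarse graining of \(\mathcal A_{\rm phys}\) that \(\mathcal A_{\rm phys}\) itself coarse-grains, and so coincides with it up to null outcomes and relabeling.

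\textbf{Step 3: factorization, optimality, strictness.} The partition defines a deterministic relabeling \(\mathsf R_{\mathcal A}:\alpha\mapsto x\) with \(\alpha\in S_x\), which gives \eqref{eq:factor-through-aphys} by linearity of \eqref{eq:readout-channel}. If the admissible readout is randomized, \(\mathsf R_{\mathcal A}\) becomes a stochastic map, which is still a free operation. Data processing for total variation then yields \eqref{eq:pointwise-readout-optimality}. Taking the supremum in \eqref{eq:restricted-functional} yields \eqref{eq:variational-argmax}. For strictness, monotonicity of total variation under postprocessing gives \(\mathcal F_{\mathcal P}(\mathcal A_1)\le\mathcal F_{\mathcal P}(\mathcal A_2)\). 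I would obtain the strict inequality from the pair that \(\mathcal A_2\) separates but \(\mathcal A_1\) does not. This requires reading ``separated'' as a gain in success probability that attains the supremum, so I would note that \(\mathcal P\) should be finite or the supremum attained. The tube-separating hypothesis then rules out every proper coarse graining of \(\mathcal A_{\rm phys}\) as a maximizer.

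Finally, Morita invariance follows because equivalent presentations of \(\mathcal M_A\) give isomorphic centers. The conjugated representations leave \(\Tr(P_\alpha O_A)\) unchanged.
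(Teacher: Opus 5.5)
Your Steps 1--3 follow the paper's own constraint chain. Cut locality puts the effects in \(\mathsf R_A({\rm Tube}_{\mathcal C}(\mathcal M_A))\), covariance makes them commute with it, and finite semisimplicity turns central projectors into partitions of the \(e_\alpha\). Data processing for total variation then gives the factorization, the pointwise bound and the argmax claim. You are more careful than the paper in two places. First, you check via the L\"uders instrument that the tube-center readout itself lies in \(\mathfrak M_{\rm adm}\); the argmax claim needs this and the paper assumes it. Second, you correctly flag that \(E_x\mathsf R_A(t)=\mathsf R_A(t)E_x\) is an axiom about covariance, not a consequence of sequential stability, which as defined only requires closure under composition and postprocessing. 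Under your strict reading the identity does follow. For the L\"uders outcome map, equality of the single-Kraus maps \(X\mapsto E_xTXT^\dagger E_x\) and \(X\mapsto TE_xXE_xT^\dagger\), with \(T=\mathsf R_A(t)\), gives \(E_xT=c\,TE_x\) with \(|c|=1\). Then \(E_x^2=E_x\) forces \(c=1\); if \(TE_x=0\), then \(E_xT=0\) as well.

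\textbf{The gap is the strict inequality} \(\mathcal F_{\mathcal P}(\mathcal A_1)<\mathcal F_{\mathcal P}(\mathcal A_2)\), and your caveat (finite \(\mathcal P\), attained supremum) does not repair it. A pair \((\lambda,\lambda')\) separated by \(\mathcal A_2\) but not by \(\mathcal A_1\) only gives the pairwise gap \(P^{\mathcal A_1}_{\rm succ}(\lambda,\lambda')=\tfrac12<P^{\mathcal A_2}_{\rm succ}(\lambda,\lambda')\). The two suprema can still coincide, because a different pair may saturate both.

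Here is a concrete counterexample in the product-KW model. Take \(\mathcal A_1=\mathcal A_{\mathcal Q}\) (the \(G\)-character readout), \(\mathcal A_2=\mathcal A_{\rm phys}\), and \(\mathcal P=\{\lambda,\lambda',\mu,\mu'\}\). Let \(\lambda,\lambda'\) be the coherent (\(r=1\)) and balanced unread (\(r=1/2\)) endpoint protocols at some \(p_0>0\). Let \(\mu,\mu'\) have reduced overlaps supported in the \(G\)-trivial block and in the \((+-)\) character block, respectively. Then \(D_{\rm TV}^{\mathcal Q}(\mu,\mu')=1\), so \(\mathcal F_{\mathcal P}(\mathcal A_{\mathcal Q})=1=\mathcal F_{\mathcal P}(\mathcal A_{\rm phys})\). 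This holds although \(\mathcal P\) is finite, every supremum is attained, and \(\mathcal P\) separates the fiber \(\{N_+,N_-\}\). Adding more pairs to make \(\mathcal P\) tube-separating in any stronger sense changes nothing, since \(1\) is already the maximum. So the proper coarse graining \(\mathcal A_{\mathcal Q}\) is also a maximizer, and uniqueness of the argmax fails.

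The paper's sketch (``every proper coarse graining forgets at least one conditional tube fiber \dots\ giving strict monotonicity'') has the same hole, so the last two sentences of the theorem need an extra hypothesis. One option is that the separated pair satisfies \(P^{\mathcal A_2}_{\rm succ}(\lambda,\lambda')>\mathcal F_{\mathcal P}(\mathcal A_1)\). Another is that every pair in \(\mathcal P\) has identical \(\mathcal A_1\)-statistics, as for the target family \(\mathcal P_\pi\) of Theorem~\ref{thm:restricted-tube-discrimination}, where \(\mathcal F_{\mathcal P_\pi}(\mathcal A_{\mathcal Q})=\tfrac12\). Uniqueness should be stated in the postprocessing preorder, where your Step~2 already gives it independently of \(\mathcal P\), rather than as uniqueness of the maximizer of \(\mathcal F_{\mathcal P}\).
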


\emph{Proof sketch.}
The argument is a physical constraint chain rather than a choice of tube
basis.  First, \(O_A=\sqrt{\rho_A}\sigma_A\sqrt{\rho_A}\) is the positive
operator whose sector functionals are estimated; it is not assumed to be
directly measured as an observable.  A sector outcome must therefore define a
POVM-type positive weight, \(\Tr(P O_A)\ge0\), so admissible sector labels are
represented by positive projectors on the subsystem Hilbert space.
Second, the readout is required to be local at the entangling cut.  This
restricts generalized-symmetry operations to line segments, defect endpoints,
and boundary-changing junctions supported on the cut, namely to the boundary
tube algebra of \((\mathcal C,\mathcal M_A)\).  Third, a physical sector
measurement cannot depend on a particular presentation of the boundary module.
It must be covariant under boundary-module intertwiners, which forces the
measured labels to commute with all allowed module actions.  Thus the
classical labels lie in the center of the boundary tube algebra.  Fourth,
sequential sector measurements must again be admissible: their projectors must
be closed under composition, refinement, and coarse graining.  Since the center
is finite semisimple and commutative, this closed classical readout algebra is
generated by its primitive central idempotents.  Any further refinement would
either use nonlocal subsystem operators, break boundary-module covariance, or
fail closure as a classical sector record.  Any coarser readout is obtained by
grouping primitive idempotents.  This proves maximality and uniqueness in the
classical-postprocessing preorder, up to the standard equivalence of boundary
modules.  Since total variation distance and the associated binary
hypothesis-testing success probability cannot increase under stochastic
postprocessing, the tube-center readout pointwise dominates every admissible
coarse graining, Eq.~\eqref{eq:pointwise-readout-optimality}.  Taking the
supremum over \(\lambda,\lambda'\in\mathcal P\) gives the variational
characterization in Eq.~\eqref{eq:variational-argmax}.  This is also the
resource-theoretic completeness statement: an admissible instrument cannot
produce sector records outside the tube center without leaving
\(\mathfrak M_{\rm adm}\); within the cone it can only apply a free
postprocessing to the primitive central-idempotent record.  Tube-separating
protocol families make the optimum unique because every proper coarse graining
forgets at least one conditional tube fiber that is tested by some pair in
\(\mathcal P\), giving strict monotonicity in that preorder.

The basic observable is the normalized tube-resolved overlap distribution
\begin{equation}
  p_\alpha
  =
  \frac{\Tr(P_\alpha O_A)}{\Tr O_A},
  \qquad
  \sum_\alpha p_\alpha=1 .
  \label{eq:tube-overlap-distribution}
\end{equation}
Equation~\eqref{eq:tube-overlap-distribution} is the pushforward of the
positive overlap object along the tube-sector functor.  It is positive,
directly normalized, and does not require first introducing an entropy.

Higher positive-overlap moments can be resolved in the same sectors,
\begin{equation}
  Z_\alpha^{(m)}
  =
  \Tr\!\left[(P_\alpha O_A P_\alpha)^{m/2}\right],
  \qquad
  p_\alpha^{(m)}
  =
  \frac{Z_\alpha^{(m)}}{\sum_\beta Z_\beta^{(m)}} ,
  \label{eq:tube-moment-distribution-prx}
\end{equation}
with \(m=2\) reducing to Eq.~\eqref{eq:tube-overlap-distribution} when the
projectors block diagonalize the positive overlap.  The use of
\(P_\alpha O_A P_\alpha\), rather than only a defect-decorated trace, is
important: it tests whether the candidate tube resolution also resolves the
positive overlap spectrum.

Often the tube functor factors through a coarser readout
\(\mathcal Q\), for example an ordinary group-character resolution.  Here
\(\mathcal T\) denotes the set of tube labels \(\alpha\) and
\(\mathcal Q\) denotes the set of coarse labels \(q\).  Let
\(\pi:\mathcal T\to\mathcal Q\) be this forgetful map from tube sectors to
coarse labels.  The scalar overlap is the further pushforward to the terminal
one-outcome readout.  The coarse and conditional distributions are
\begin{equation}
  p_q=\sum_{\alpha:\pi(\alpha)=q}p_\alpha,
  \qquad
  p_{\alpha|q}=\frac{p_\alpha}{p_q}
  \quad (p_q>0).
  \label{eq:coarse-and-conditional}
\end{equation}
The physical question becomes whether the conditional distribution inside a
fiber \(\pi^{-1}(q)\) can change while all scalar and coarse data remain
fixed.

\begin{corollary}[Readout hierarchy]
\label{cor:readout-hierarchy}
The tube distribution induces a hierarchy of overlap data
\[
  p_\alpha
  \longmapsto
  p_q=\sum_{\alpha:\pi(\alpha)=q}p_\alpha
  \longmapsto
  \sum_q p_q=1 .
\]
The scalar positive-overlap diagnostic depends only on the unnormalized total
\(\Tr O_A\) and hence
forgets all normalized sector labels.  A coarse symmetry-resolved readout
depends only on \(\{p_q\}\).  The tube readout depends on the full
\(\{p_\alpha\}\), or equivalently on the coarse weights together with the
conditional fiber distributions \(\{p_{\alpha|q}\}\).  Each arrow is a
physically admissible coarse graining, so the lost conditional data cannot be
recovered by postprocessing the coarser readout.
\end{corollary}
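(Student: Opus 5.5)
The plan is to read off the corollary directly from Theorem~\ref{prop:canonical-tube-readout} together with the definition of the pushforward \(\pi_*\). I would prove three things in order: each arrow in the hierarchy is a legitimate admissible coarse graining; each readout depends only on the stated data; and the forgotten conditional data cannot be recovered.

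\textbf{Step 1: the arrows are free postprocessings.} By Eq.~\eqref{eq:factor-through-aphys}, every admissible readout equals \(\mathsf R_{\mathcal A}\circ\mathcal M_{\mathcal A_{\rm phys}}\). For the coarse readout I would take \(E_q=\sum_{\alpha\in\pi^{-1}(q)}P_\alpha\). These are sums of orthogonal central projectors, so they are again projectors in \(\mathsf R_A(Z(\mathrm{Tube}_{\mathcal C}(\mathcal M_A)))\) and lie in \(\mathfrak M_{\rm adm}\). Linearity of the trace then gives \(\Tr(E_qO_A)/\Tr O_A=\sum_{\pi(\alpha)=q}p_\alpha\). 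This is exactly the deterministic relabeling \(\pi_*\), one of the free maps of Definition~\ref{def:tube-resource-theory}.

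The terminal map \(q\mapsto\ast\) is likewise free. It uses the completeness relation \(\sum_\alpha P_\alpha=\mathsf R_A(1)=\mathbb 1_A\), which holds because the primitive central idempotents sum to the unit. The normalized output of the terminal map is identically \(1\), so the only surviving scalar is the normalization \(\Tr O_A=F(\rho_A,\sigma_A)^2\). This shows that the scalar diagnostic forgets all normalized sector labels.

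\textbf{Step 2: the equivalent parametrizations.} The equivalence between \(\{p_\alpha\}\) and \((\{p_q\},\{p_{\alpha|q}\})\) follows from Eq.~\eqref{eq:coarse-and-conditional}. In one direction \(p_\alpha=p_{\pi(\alpha)}\,p_{\alpha|\pi(\alpha)}\). Fibers with \(p_q=0\) carry no weight, so there the conditional distribution is irrelevant.

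\textbf{Step 3: irrecoverability.} This is the substantive step. I would argue it as a non-injectivity statement. Suppose some fiber \(\pi^{-1}(q)\) has at least two labels. Then take two tube distributions \(p\neq p'\) with \(\pi_*p=\pi_*p'\) but \(p_{\cdot|q}\neq p'_{\cdot|q}\).

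Any postprocessing map \(\mathsf S\) acting on the coarse data yields \(\mathsf S\pi_*p=\mathsf S\pi_*p'\). Hence no \(\mathsf S\) satisfies \(\mathsf S\circ\pi_*=\mathrm{id}\) on these two distributions. Equivalently, the monotone \(\mathcal R_\pi(p,p')=D_{\rm TV}(p,p')>0\) of Eq.~\eqref{eq:tube-resource-monotone} is strictly positive. Since \(D_{\rm TV}\) is nonincreasing under stochastic maps, no free operation can restore the lost distinguishability.

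The main obstacle is that this argument is purely at the level of distributions. To make the statement operational rather than formal, one must realize such a pair \(p,p'\) by actual overlaps \(O_A^\lambda,O_A^{\lambda'}\). I would first try the following. Choose states supported in the ranges of two distinct \(P_\alpha\) within one fiber, with equal total weight. Then mix them with different weights, holding the coarse weights fixed. This realizes the pair whenever the representation \(\mathsf R_A\) is faithful on that fiber, meaning each \(P_\alpha\neq0\). For a specific model I would defer to a tube-separating protocol family, as in the product-KW example.
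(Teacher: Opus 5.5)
Your proposal is correct and follows essentially the paper's own route. The coarse and scalar readouts are the forgetful pushforward $\pi_*$, which is a free relabeling in the sense of Definition~\ref{def:tube-resource-theory}; the tube data factor as $p_\alpha=p_{\pi(\alpha)}\,p_{\alpha|\pi(\alpha)}$; and irrecoverability is the $\ker\pi_*$ plus data-processing argument of Appendix~\ref{sec:sm-general-fiber}. Your realizability remark, which requires $P_\alpha\neq0$ for at least two labels in the fiber, is a useful sharpening that the paper leaves to the product-KW witness.

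One side remark should be corrected, although nothing in your argument depends on it. By cyclicity $\Tr O_A=\Tr(\rho_A\sigma_A)$, and this is not $F(\rho_A,\sigma_A)^2$ for mixed reduced states: for example, $\rho_A=\sigma_A=I/d$ gives $\Tr O_A=1/d$, while the fidelity is $1$.
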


\begin{corollary}[Category-independent universality]
\label{cor:category-independent-universality}
For any finite semisimple fusion category \(\mathcal C\) and entangling-cut
boundary module \(\mathcal M_A\), the physical readout algebra selected by the
admissible-instrument resource theory is
\[
  \mathcal A_{\rm phys}
  =
  Z\!\left({\rm Tube}_{\mathcal C}(\mathcal M_A)\right),
\]
independent of any Ising-specific realization.  A model supplies a witness,
not the definition: it fixes \(\mathcal M_A\), realizes the corresponding
boundary-tube idempotents, and tests whether a protocol family separates a
nontrivial fiber of a chosen coarse map \(\pi:\mathcal T\to\mathcal Q\).  If no
such fiber or no tube-separating protocol exists, the theorem predicts no
advantage over the coarse readout for that measurement class.
\end{corollary}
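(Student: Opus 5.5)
The plan is to obtain the corollary by checking that the hypotheses of Theorem~\ref{prop:canonical-tube-readout} hold for \emph{every} finite semisimple pair \((\mathcal C,\mathcal M_A)\), and then to read off the witness and no-advantage clauses from Definition~\ref{def:tube-resource-theory} and Corollary~\ref{cor:readout-hierarchy}. The only model-independent input the theorem needs is that \({\rm Tube}_{\mathcal C}(\mathcal M_A)\) is finite-dimensional and semisimple.

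\emph{Step 1: semisimplicity.} I would take \(\mathcal M_A\) to be a finite semisimple left \(\mathcal C\)-module category with simple objects \(m_i\). The boundary tube algebra is spanned by the finite set of morphism spaces \({\rm Hom}(c\triangleright m_i, m_j)\), so it is finite-dimensional. For semisimplicity I would use the standard identification, due to Ostrik and Kitaev--Kong, of its finite-dimensional representations with the dual category \(\mathcal C^*_{\mathcal M_A}={\rm Fun}_{\mathcal C}(\mathcal M_A,\mathcal M_A)\), or with a Morita-equivalent category of boundary excitations. By Etingof--Ostrik, this category is again a finite semisimple (multi)fusion category. A finite-dimensional algebra whose module category is semisimple is itself semisimple. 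Alternatively, in the unitary setting one can exhibit the positive-definite trace form built from quantum dimensions and invoke the \(C^*\)-structure. Either route gives that \(Z({\rm Tube}_{\mathcal C}(\mathcal M_A))\) is a finite product of copies of \(\mathbb C\), with primitive central idempotents \(e_\alpha\).

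\emph{Step 2: universality.} With Step~1 in hand, Theorem~\ref{prop:canonical-tube-readout} applies verbatim and gives \(\mathcal A_{\rm phys}=Z({\rm Tube}_{\mathcal C}(\mathcal M_A))\). Its proof sketch uses only positivity, cut locality, module covariance, sequential closure and data processing, and none of these refers to Ising data. The model enters only through the representation \(\mathsf R_A\), and Morita-equivalent presentations yield conjugate projectors \(P_\alpha\) and hence identical traces. Thus the selected algebra depends only on the pair \((\mathcal C,\mathcal M_A)\), which is the meaning of ``witness, not definition''.

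\emph{Step 3: the no-advantage clause.} Suppose first that \(\pi\) has no nontrivial fiber. Then \(\pi\) is injective on the labels carrying weight, so \(\pi_*\) is an invertible relabeling and \(\mathcal R_\pi(p,p')=0\) for all pairs. Hence \(P_{\rm succ}\) coincides for the tube and coarse readouts, and so does \(\mathcal F_{\mathcal P}\).

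Suppose instead that nontrivial fibers exist but \(\mathcal P\) is not tube-separating on them. I would interpret the intended measurement class as the pairs compared through fibers, i.e. pairs with \(\pi_*p=\pi_*p'\). For such pairs the failure of separation means \(p_{\alpha|q}=p'_{\alpha|q}\) on each fiber, which forces \(p=p'\). Therefore both total-variation distances vanish, and by Definition~\ref{def:tube-resource-theory} the pair is free with \(\mathcal R_\pi=0\). For general pairs with differing coarse data, the strictness clause of the theorem is simply not triggered, so the tube readout gives no guaranteed advantage.

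\emph{Main obstacle.} I expect the hardest part to be Step~1 at full generality: Step~1 relies on that cited dual-category machinery rather than proving semisimplicity directly. A second delicate point is making precise that ``no advantage'' refers to the fiber-testing pairs and not to \(\mathcal F_{\mathcal P}\) over arbitrary pairs. I would state this restriction explicitly, since for arbitrary pairs the inequality \(\mathcal F_{\mathcal P}(\text{coarse})\le\mathcal F_{\mathcal P}(\mathcal A_{\rm phys})\) is all that holds, and it may be strict for reasons unrelated to fibers.
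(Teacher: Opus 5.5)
Your proposal is correct, and its backbone (Step 2) is the paper's own argument. In the appendix on category-independent scope, the paper observes that the proof of Theorem~\ref{prop:canonical-tube-readout} uses only three things: finite semisimplicity of ${\rm Tube}_{\mathcal C}(\mathcal M_A)$, the existence of $\mathcal M_A$, and the postprocessing preorder. Ising data therefore enter only through the choice of $\mathcal C$, $\mathcal M_A$ and $\mathcal P$. The no-advantage clause is stated there without separate argument, next to the strict target-family gap; the underlying computation is the converse part of Theorem~\ref{thm:restricted-tube-discrimination}.

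You differ from the paper in two useful ways. First, you derive the semisimplicity hypothesis from ``$\mathcal C$ is fusion'', via Etingof--Ostrik duality for $\mathcal C^*_{\mathcal M_A}$ or the $C^*$-structure in the unitary case. The paper instead assumes semisimplicity in the theorem and only cites it ``in the rational examples considered here''. Your step is what actually licenses the corollary's quantifier over all finite semisimple fusion categories. The cost is assuming $\mathcal M_A$ is a finite semisimple module category and importing the identification of the representation category; the unitary route avoids that identification. Second, you make explicit that ``no advantage'' must refer to fixed-pushforward pairs, which the paper leaves inside the phrase ``for that measurement class''. That restriction is genuinely needed.

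One correction to your closing remark: a strict gap for arbitrary pairs is never unrelated to fibers. Since
\[
  \mathcal R_\pi(p,p')
  =\frac12\sum_{q\in\mathcal Q}\Bigl[\sum_{\alpha\in\pi^{-1}(q)}|p_\alpha-p'_\alpha|-|p_q-p'_q|\Bigr],
\]
each bracket vanishes unless the differences $p_\alpha-p'_\alpha$ change sign inside a nontrivial fiber. That forces $p_q,p'_q>0$ and $p_{\alpha|q}\neq p'_{\alpha|q}$ on that fiber. The correct caveat is that such a pair may also have $\pi_*p\neq\pi_*p'$, and so does not count as tube-separating in the paper's sense. The same identity gives a slightly stronger Step 3: $\mathcal R_\pi(p,p')=0$ for every pair whose conditionals agree on all fibers where both coarse weights are positive, with no need to assume equal coarse data.
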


\section{Main Result}

The physical content of the result is a restricted-measurement hypothesis-test
for symmetry-compatible readouts.  By
Theorem~\ref{prop:canonical-tube-readout}, the tube readout is the extremal
measurement allowed by positivity, locality at the cut, boundary-module
covariance, and closure under sequential instruments.
Corollaries~\ref{cor:readout-hierarchy} and
\ref{cor:category-independent-universality} then identify scalar and coarse
symmetry-resolved diagnostics as successive pushforwards of this readout.  A
group-resolved or scalar diagnostic is obtained by forgetting part, or all, of
the tube label.  Hence two protocols can be identical after coarse graining but
still differ before the forgetful map is applied.  The omitted information is
not a new scalar invariant and not a basis choice; it is a conditional
probability distribution inside a tube fiber.  No coarser
symmetry-compatible readout can recover this conditional information, because
the pushforward has already identified the corresponding tube outcomes.

Equivalently, write \(\pi_*\) for the stochastic pushforward
\((\pi_*p)_q=\sum_{\alpha:\pi(\alpha)=q}p_\alpha\).  The conditional signal is an
element of the kernel of this map:
\begin{equation}
  \pi_*\delta p=0,\qquad \delta p\ne0 .
  \label{eq:kernel-separation}
\end{equation}
Any scalar, group-resolved, or defect-fusion diagnostic that factors through
\(\pi_*p\) is exactly blind to such a deformation.  Measuring the missing tube
idempotent is therefore irreducible relative to the coarser readout: it cannot
be reconstructed by postprocessing the scalar or \(G\)-resolved data.

\begin{theorem}[No-go for symmetry-compatible alternatives]
\label{thm:restricted-tube-discrimination}
Fix an entangling-cut boundary module \(\mathcal M_A\), and let
\(\{P_\alpha\}\) be the complete orthogonal set of tube projectors obtained
from the primitive central idempotents of
\({\rm Tube}_{\mathcal C}(\mathcal M_A)\).  Let
\(\pi:\mathcal T\to\mathcal Q\) be any coarse readout that is a forgetful
map of tube labels, and let
\[
  \mathcal A_{\mathcal Q}
  =
  {\rm span}\!\left\{
  \sum_{\alpha:\pi(\alpha)=q}P_\alpha
  \right\}_{q\in\mathcal Q}
\]
be the induced coarse readout algebra.  Any finite positive sector readout that
is not contained
in the center of the boundary tube algebra fails at least one admissibility
condition: cut locality, boundary-module covariance, or stability under
sequential instruments.  Any readout that remains in
\(\mathfrak M_{\rm adm}\) is therefore a classical coarse graining of
\(\{P_\alpha\}\).

Now consider two protocols \(\lambda\) and \(\lambda'\) that
prepare or compare states with the same scalar overlap and the same coarse
distribution:
\[
  \Tr O_A(\lambda)=\Tr O_A(\lambda'),
  \qquad
  p_q(\lambda)=p_q(\lambda') \quad \forall q\in\mathcal Q .
\]
Here \(O_A(\lambda)\) denotes the positive overlap operator produced by the
protocol \(\lambda\).
Assume further that the tube-label measurement is allowed, i.e. the measured
classical outcome is \(\alpha\in\mathcal T\), while the scalar and
\(\mathcal Q\)-only observers see only the corresponding pushforwards.  If
the conditional tube distributions differ on any fiber with nonzero weight,
then every scalar or \(\mathcal Q\)-only readout has the same classical
statistics, while the tube-sector readout distinguishes the protocols.
Conversely, if all conditional tube distributions agree on every
nonzero-weight fiber, then no readout in \(\mathfrak M_{\rm adm}\) has
additional classical distinguishability beyond the coarse readout.  The tube
total variation distance is
\begin{equation}
  D_{\rm TV}^{\rm tube}
  =
  \frac12\sum_{q\in\mathcal Q}p_q
  \sum_{\alpha:\pi(\alpha)=q}
  \left|p_{\alpha|q}(\lambda)-p_{\alpha|q}(\lambda')\right|.
  \label{eq:tube-tv-main}
\end{equation}
For equal priors, the optimal one-shot success probability within the
admissible sector-readout cone is
\(P_{\rm succ}^{\rm tube}=\frac12(1+D_{\rm TV}^{\rm tube})\), whereas the
coarse readout has \(D_{\rm TV}^{\mathcal Q}=0\).  Hence, on the target family
\(\mathcal P_\pi\) of protocol pairs with identical \(\mathcal Q\)-pushforward
but nonidentical tube conditionals,
\begin{equation}
  \mathcal F_{\mathcal P_\pi}(\mathcal A_{\mathcal Q})
  =
  \frac12
  <
  \mathcal F_{\mathcal P_\pi}(\mathcal A_{\rm phys}) .
  \label{eq:no-go-strict-functional}
\end{equation}
\end{theorem}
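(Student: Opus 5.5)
The proof splits into a structural part and a quantitative part. For the structural part (no-go for noncentral readouts), I will invoke Theorem~\ref{prop:canonical-tube-readout} directly. Any finite positive sector readout satisfying cut locality, boundary-module covariance and sequential stability lies in \(\mathfrak M_{\rm adm}\). By Eq.~\eqref{eq:factor-through-aphys}, it is then a free postprocessing of \(\mathcal M_{\mathcal A_{\rm phys}}\), so its projectors are sums of the \(P_\alpha\) and lie in the center. The contrapositive is the first claim. A readout not contained in \(Z({\rm Tube}_{\mathcal C}(\mathcal M_A))\) either uses operators outside the image of \(\mathsf R_A\) (failing locality) or fails to commute with some tube element (failing covariance). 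If it is commutative but not closed under products with the central idempotents, it fails sequential stability. I would organize this as a three-case split mirroring the constraint chain in the proof sketch of Theorem~\ref{prop:canonical-tube-readout}. I would then note that any surviving readout is a grouping of \(\{P_\alpha\}\), possibly followed by randomization.

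For the quantitative part, I first compute the tube distribution in fiber coordinates, \(p_\alpha=p_{\pi(\alpha)}p_{\alpha|\pi(\alpha)}\). With \(p_q(\lambda)=p_q(\lambda')\), the total-variation sum regroups fiber by fiber into Eq.~\eqref{eq:tube-tv-main}, while \(D_{\rm TV}(\pi_*p,\pi_*p')=0\). The scalar readout is the terminal one-outcome algebra, and its output is identical because it is a pushforward of \(p_q\); the normalization \(\Tr O_A\) is equal by hypothesis. Hence every scalar or \(\mathcal Q\)-only observer has identical statistics. If some fiber with \(p_q>0\) has differing conditionals, the \(q\)-term of Eq.~\eqref{eq:tube-tv-main} is strictly positive, so \(D_{\rm TV}^{\rm tube}>0\). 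Conversely, if all conditionals agree on nonzero-weight fibers, then \(p(\lambda)=p(\lambda')\) as tube distributions, since zero-weight fibers carry zero mass. Every admissible readout factors through \(\mathcal M_{\mathcal A_{\rm phys}}\), and data processing then gives zero extra distinguishability; this is exactly \(\mathcal R_\pi=0\) in Eq.~\eqref{eq:tube-resource-monotone}.

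The optimal success probability follows from the Helstrom/Le Cam identity for two classical distributions with equal priors, \(P_{\rm succ}=\tfrac12(1+D_{\rm TV})\). Combined with Eq.~\eqref{eq:pointwise-readout-optimality}, this shows the tube value is the maximum over \(\mathfrak M_{\rm adm}\). For Eq.~\eqref{eq:no-go-strict-functional}, every pair in \(\mathcal P_\pi\) has \(D^{\mathcal Q}_{\rm TV}=0\), so \(\mathcal F_{\mathcal P_\pi}(\mathcal A_{\mathcal Q})=\tfrac12\). Since \(\mathcal P_\pi\) is nonempty by definition, any one pair gives \(D^{\rm tube}_{\rm TV}>0\), hence the supremum is strictly larger than \(\tfrac12\). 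This is also the strict-monotonicity clause of Theorem~\ref{prop:canonical-tube-readout} with \(\mathcal A_1=\mathcal A_{\mathcal Q}\) and \(\mathcal A_2=\mathcal A_{\rm phys}\).

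The main obstacle is the structural no-go. The admissibility conditions are stated physically rather than as precise algebraic conditions, so the proof must convert them into exact conditions. I would first formalize covariance as commutation with \(\mathsf R_A({\rm Tube}_{\mathcal C}(\mathcal M_A))\) and locality as membership in that image. The readout projectors then lie in \(\mathsf R_A(\mathrm{Tube})\cap\mathsf R_A(\mathrm{Tube})'\), which is the image of the center by semisimplicity. The remaining gap, possible non-faithfulness of \(\mathsf R_A\), I would handle by discarding null outcomes, as allowed in Theorem~\ref{prop:canonical-tube-readout}.
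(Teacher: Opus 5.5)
Your proposal is correct and follows essentially the paper's route. For the structural no-go you use the locality/covariance/stability chain that lands readouts in the center, which you sharpen to \(\mathsf R_A(\mathrm{Tube})\cap\mathsf R_A(\mathrm{Tube})'=\mathsf R_A(Z(\mathrm{Tube}))\), combined with the factorization through \(\mathcal A_{\rm phys}\); for the quantitative part you use the fiber decomposition \(p_\alpha=p_q\,p_{\alpha|q}\), the regrouped total-variation sum and \(P_{\rm succ}=\frac12(1+D_{\rm TV})\), all exactly as in the paper's post-theorem remark and its general-fiber appendix. One small correction: \(\mathcal P_\pi\) is not nonempty by definition, so the strict inequality needs as an explicit hypothesis a nontrivial fiber realized by at least one protocol pair.
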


The no-go statement should be read in the admissible-cone sense.  A putative
readout \(\mathcal A\not\subset\mathcal A_{\rm phys}\) either lies outside
\(\mathfrak M_{\rm adm}\), because it violates cut locality,
boundary-module covariance, or sequential stability, or else reduces after
enforcing those constraints to a classical postprocessing of
\(\mathcal A_{\rm phys}\).  For a proper coarse graining and a tube-separating
protocol family, Eq.~\eqref{eq:no-go-strict-functional} gives the strict
decision-theoretic separation.  This is the precise operational content of
``unique optimality''; it does not claim optimality over arbitrary POVMs on
\(\mathcal H_A\), where the symmetry and boundary constraints have been
discarded.

The no-go content is invariant under basis changes inside the same representation
of the boundary tube algebra.  It is not invariant under changing the
entangling-cut boundary module, because that changes the physical measurement
algebra.  This is the intended dependence: the tube sectors are categorical
boundary data of the reduced overlap problem.  Conversely, any readout that
keeps only a subset or quotient of the primitive central-idempotent labels is a
coarse graining of the same canonical tube readout, so it cannot recover
conditional information lost inside a fiber.

\begin{theorem}[Non-redundancy of the tube readout]
\label{thm:noninvertible-hierarchy}
Let
\[
  \begin{aligned}
  \mathcal A_{\mathcal Q}
  &={\rm span}\!\left\{
  \sum_{\alpha:\pi(\alpha)=q}P_\alpha
  \right\}_{q\in\mathcal Q},\\
  \mathcal A_{\rm phys}
  &=
  {\rm span}\{P_\alpha\}_{\alpha\in\mathcal T}.
  \end{aligned}
\]
If the boundary module contains a nontrivial tube fiber,
\(|\pi^{-1}(q)|>1\) for some coarse label \(q\), then
\[
  \mathcal A_{\mathcal Q}
  \subsetneq
  \mathcal A_{\rm phys}
  \simeq
  Z\!\left({\rm Tube}_{\mathcal C}(\mathcal M_A)\right).
\]
Moreover \(\mathcal A_{\rm phys}\) is a commutative sector-measurement algebra
inside \(\mathcal B(\mathcal H_A)\), not the full algebra of all subsystem
operators.  Thus the tube readout is not a change of basis in the full Hilbert
space: under the stated physical constraints, any strictly finer readout would
leave the admissible cone, while any admissible alternative is a coarse
graining of the tube-center readout.
\end{theorem}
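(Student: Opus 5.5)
The plan is to reduce everything to elementary linear algebra on the orthogonal family \(\{P_\alpha\}\), and then to invoke Theorem~\ref{prop:canonical-tube-readout} for the admissibility part. The statement makes three assertions: the inclusion \(\mathcal A_{\mathcal Q}\subseteq\mathcal A_{\rm phys}\), its strictness, and the two structural claims (commutativity and properness inside \(\mathcal B(\mathcal H_A)\)). I will treat them in that order.

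\emph{Inclusion and the isomorphism.} Each generator \(\sum_{\alpha\in\pi^{-1}(q)}P_\alpha\) of \(\mathcal A_{\mathcal Q}\) is a linear combination of the \(P_\alpha\). Hence \(\mathcal A_{\mathcal Q}\subseteq{\rm span}\{P_\alpha\}=\mathcal A_{\rm phys}\).

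For the isomorphism with the center, I use that \(\mathsf R_A\) restricted to \(Z({\rm Tube}_{\mathcal C}(\mathcal M_A))\) is an algebra homomorphism. The primitive central idempotents \(e_\alpha\) form a basis of the center, by finite semisimplicity. Their images \(P_\alpha\) are mutually orthogonal projectors summing to the identity on the represented space.

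If every \(P_\alpha\) is nonzero, then the \(P_\alpha\) are linearly independent. This follows from \(P_\beta\big(\sum_\alpha c_\alpha P_\alpha\big)=c_\beta P_\beta\). So \(\mathsf R_A\) restricts to an isomorphism \(Z\simeq\mathcal A_{\rm phys}\). Tube sectors absent from \(\mathcal H_A\) (with \(P_\alpha=0\)) are the null outcomes already quotiented out in Theorem~\ref{prop:canonical-tube-readout}. I will state this convention explicitly and work with the faithful labels.

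\emph{Strictness.} Choose \(q\) with \(|\pi^{-1}(q)|>1\), and pick \(\alpha_0\in\pi^{-1}(q)\). Suppose \(P_{\alpha_0}\in\mathcal A_{\mathcal Q}\), say \(P_{\alpha_0}=\sum_{q'}c_{q'}\Pi_{q'}\), where \(\Pi_{q'}=\sum_{\pi(\alpha)=q'}P_\alpha\).

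Multiply both sides by \(P_\beta\) for some \(\beta\in\pi^{-1}(q)\setminus\{\alpha_0\}\). The left side gives \(0\) and the right side gives \(c_qP_\beta\), so \(c_q=0\). Multiplying instead by \(P_{\alpha_0}\) gives \(P_{\alpha_0}=c_qP_{\alpha_0}=0\). This contradicts faithfulness.

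Hence \(P_{\alpha_0}\notin\mathcal A_{\mathcal Q}\). A dimension count gives the same conclusion: \(\dim\mathcal A_{\mathcal Q}=|\mathcal Q|<|\mathcal T|=\dim\mathcal A_{\rm phys}\).

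\emph{Commutativity and properness.} \(\mathcal A_{\rm phys}\) is commutative because \(P_\alpha P_\beta=\delta_{\alpha\beta}P_\alpha\). To show it is not all of \(\mathcal B(\mathcal H_A)\), there are two cases.

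First, if \(\dim\mathcal H_A>|\mathcal T|\), some \(P_\alpha\) has rank at least \(2\). Then \(\mathcal B(\mathcal H_A)\), being noncommutative in that case, strictly contains \(\mathcal A_{\rm phys}\).

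Second, if every \(P_\alpha\) has rank one, I still need noncommutativity of \(\mathcal B(\mathcal H_A)\). This holds because the nontrivial fiber forces \(\dim\mathcal H_A\ge2\), and then \(\mathcal B(\mathcal H_A)\) is not commutative. So \(\mathcal A_{\rm phys}\ne\mathcal B(\mathcal H_A)\) in either case. The \(P_\alpha\) are therefore sector labels, not a basis change.

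\emph{Admissibility.} The final interpretive sentence follows from Theorem~\ref{prop:canonical-tube-readout} together with the no-go Theorem~\ref{thm:restricted-tube-discrimination}. Any admissible readout factors as in Eq.~\eqref{eq:factor-through-aphys}, so it is a coarse graining of \(\{P_\alpha\}\). Any strictly finer commutative readout would not lie in \(Z\) and thus leaves \(\mathfrak M_{\rm adm}\).

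\emph{Main obstacle.} I expect the faithfulness issue to be the main obstacle: ensuring each \(P_\alpha\) in the nontrivial fiber is nonzero on \(\mathcal H_A\). Without it, the strict inclusion can collapse. I would handle it by making the null-outcome convention of Theorem~\ref{prop:canonical-tube-readout} explicit, or by adding the hypothesis that the fiber consists of labels realized in \(\mathsf R_A\), as in the doubled-Ising product-KW witness.
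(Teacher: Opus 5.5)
Your proof is correct. For the admissibility clause it is the paper's argument: the paper also just invokes the maximality statement \(\mathcal A\preceq Z(\mathrm{Tube}_{\mathcal C}(\mathcal M_A))\) for all \(\mathcal A\in\mathfrak M_{\rm adm}\) (Eq.~\eqref{eq:sm-nonredundancy}, i.e.\ Theorem~\ref{prop:canonical-tube-readout}).

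Strictness is where you take a different route. You argue on the observable side. Orthogonality of the \(P_\alpha\) shows directly that a projector in a nontrivial fiber is not a combination of the fiber sums. Noncommutativity of \(\mathcal B(\mathcal H_A)\) then rules out the full algebra. The paper argues on the distribution side (Sec.~\ref{sec:sm-general-fiber}). A fiber with \(|\pi^{-1}(q)|>1\) gives a nonzero \(\delta p\in\ker\pi_*\), which is invisible to every \(\mathcal Q\)-only readout but has \(D_{\rm TV}^{\rm tube}=\tfrac12\sum_\alpha|\delta p_\alpha|>0\).

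The two arguments are transposes of each other. \(\pi_*\) fails to be injective exactly when the pullback \(f\mapsto f\circ\pi\) fails to be onto \(\mathcal A_{\rm phys}\); under \(f\mapsto\sum_\alpha f_\alpha P_\alpha\), the image of that pullback is \(\mathcal A_{\mathcal Q}\).

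Your version proves the literal inclusion. It also brings into the open a hypothesis the paper covers only by ``up to null outcomes'': \(\mathsf R_A\) must be faithful on the center for \(\mathcal A_{\rm phys}\simeq Z\) to hold, and at least two labels in the fiber must have \(P_\alpha\neq0\). The paper's kernel deformation silently needs the same condition to be realizable, since a null label has \(p_\alpha=0\) for every protocol. The paper's version, in turn, directly delivers the operational gap of Eq.~\eqref{eq:sm-strict-functional-gap}, which the bare inclusion does not give without exhibiting separating protocols.

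Two small fixes. First, \(\dim\mathcal A_{\mathcal Q}\) equals the number of \(q\) with nonzero fiber sum, which is \(|\mathcal Q|\) only if \(\pi\) is onto and every fiber is realized. Second, your rank case split is unnecessary: two nonzero orthogonal projectors already force \(\dim\mathcal H_A\ge2\), and then no commutative algebra can equal \(\mathcal B(\mathcal H_A)\).
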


The entropy language is a corollary, not the starting point.  For the Shannon
entropy one has the chain rule
\begin{equation}
  S_{\rm tube}
  =
  S_{\mathcal Q}
  +
  \sum_{q\in\mathcal Q}p_q\,H(p_{\alpha|q}),
  \label{eq:entropy-chain-prx}
\end{equation}
where \(S_{\mathcal Q}\) is the Shannon entropy of the coarse distribution
\(\{p_q\}\), and \(H(p_{\alpha|q})\) is the Shannon entropy of the conditional
distribution in the fiber \(\pi^{-1}(q)\).  Thus any entropy increase relative
to the coarse resolution is exactly the conditional uncertainty inside tube
fibers.  This is why the effect should be interpreted as
measurement-dependent refinement of distinguishability, not as an artifact of
a chosen entropy functional.  The full R\'enyi-family version is recorded in
the appendices.  The variational theorem does not depend on any
particular entropy; entropies only summarize the probability distribution
selected by the admissible measurement problem.

\section{Product Kramers--Wannier Realization}

The following construction is a realization of the tube-fiber principle, not
the definition of the principle.  The definition is the boundary-tube
decomposition above.  The doubled-Ising product Kramers--Wannier (product-KW)
model is useful because it is the smallest case in which the coarse group
readout and a genuinely
noninvertible tube fiber can be separated cleanly.

The minimal nontrivial example is the doubled-Ising theory
\(\mathrm{Ising}_1\times\mathrm{Ising}_2\).  The construction uses the
Kramers--Wannier order--disorder duality of the Ising model
~\cite{KramersWannier:1941I,KramersWannier:1941II,KadanoffCeva:1971},
realized in CFT and lattice-defect language by the noninvertible Ising defect
\(D_\sigma\)~\cite{Frohlich:2004ef,Aasen:2016dop}.  Its invertible defects form
\(G=\mathbb Z_2\times\mathbb Z_2=\{1,a,b,c\}\).  The product
Kramers--Wannier (KW) defect \(N\) obeys, for every \(g\in G\),
\begin{equation}
  gN=Ng=N,\qquad
  N^2=1+a+b+c .
  \label{eq:product-kw-fusion-prx}
\end{equation}
The \(G\)-character resolution is a coarse measurement.  The product-KW tube
resolution is the boundary-tube refinement of this measurement, not an
additional model-dependent label.  In this minimal realization it refines only
the \(G\)-trivial character fiber:
\[
  \pi^{-1}(0)=\{N_+,N_-\},
\]
where \(0\) denotes the trivial \(G\)-character label, while the nontrivial
\(G\)-characters remain single sectors.

Thus the role of the product-KW example is a separation witness, not a
definition of tube algebra.  The Ising KW defect and the tube idempotents are
known categorical data.  The point established here is that the positive-overlap readout
has a nontrivial kernel relative to the scalar and \(G\)-moment algebra:
changing the endpoint record can move probability between \(N_+\) and \(N_-\)
while leaving \(\Tr O_A\) and the \(G=\mathbb Z_2\times\mathbb Z_2\) weights
unchanged.

The essential boundary input is also minimal.  At a free entangling cut, the
physical product-KW endpoint is not a square operator acting within one
Hilbert space.  It is a boundary-changing morphism in the boundary module.
Let \(b_{\rm src}\) be the source free boundary and let \(b_{\rm tgt}\) be the
\(G\)-invariant target boundary channel selected by the product-KW line.  The
endpoint lives in the morphism, or Hom, space
\begin{equation}
  {\rm Hom}_{\mathcal M_A}
  \!\left(N\triangleright b_{\rm src},b_{\rm tgt}\right).
  \label{eq:endpoint-hom-prx}
\end{equation}
For the doubled-Ising product-KW module this Hom space is one-dimensional.
Thus the endpoint map is unique up to phase once the source and target
boundary conditions are fixed.  Under the subsystem reduction functor
\[
  \mathsf R_A:\ {\rm Hom}_{\mathcal M_A}
  \!\left(N\triangleright b_{\rm src},b_{\rm tgt}\right)
  \longrightarrow
  {\rm Hom}(\mathcal H_{\rm src},\mathcal H_{\rm tgt}),
\]
let \(\eta_{\rm free}\) denote a generator of the one-dimensional endpoint
Hom space.  This boundary morphism is represented by a rectangular
source-target map \(N_{\rm free}=\mathsf R_A(\eta_{\rm free})\).  The two
endpoint-bubble evaluations then fix its normalization:
\begin{equation}
  N_{\rm free}^\dagger N_{\rm free}=1+a+b+c,
  \qquad
  N_{\rm free}N_{\rm free}^\dagger=4I_{\rm target}.
  \label{eq:endpoint-bubbles-prx}
\end{equation}
After the topological normalization \(V=N_{\rm free}/2\), these identities
force a rectangular isometry
\begin{equation}
  V^\dagger V=P_{++},\qquad
  VV^\dagger=I_{\rm target},
  \label{eq:rectangular-isometry-prx}
\end{equation}
where \(P_{++}=(1+a+b+c)/4\) projects onto the \(G\)-trivial source block and
\(I_{\rm target}\) is the identity on the target endpoint Hilbert space.  The
isometry is therefore a consequence of the boundary module and endpoint
bubbles, not a variational ansatz or a convenient same-Hilbert
representative.  What is general is the boundary-changing nature of the
endpoint: after subsystem reduction it is represented by a source-target
morphism, and only after composing with its adjoint does one obtain endomorphism
data on the source or target block.  What is special to the doubled-Ising
benchmark is that the Hom space in Eq.~\eqref{eq:endpoint-hom-prx} is
one-dimensional.  In a more general theory this Hom space may have higher
dimension; then the endpoint state inside that Hom space is additional
operational boundary data, and the tube distribution resolves that data.  The
product-KW benchmark is the minimal case in which the free endpoint is fixed
up to phase by the stated boundary data.

\begin{proposition}[Endpoint morphism principle]
\label{prop:endpoint-morphism}
Within the present boundary-module framework, a physical noninvertible endpoint
is represented on subsystem Hilbert spaces only as the reduction-functor image
of a boundary morphism.  A square same-Hilbert closure is physical only if it is
induced by such a morphism.  In the product-KW benchmark the relevant Hom space
is one-dimensional and the endpoint bubbles in
Eq.~\eqref{eq:endpoint-bubbles-prx} turn this image into the rectangular
isometry in Eq.~\eqref{eq:rectangular-isometry-prx}.
\end{proposition}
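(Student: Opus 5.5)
The statement has three parts, and I will treat them in turn. First, a physical endpoint is only the image of a boundary morphism under \(\mathsf R_A\). Second, a square same-Hilbert closure is physical only if it is induced by such a morphism. Third, in the product-KW case the bubbles force the rectangular isometry (\ref{eq:rectangular-isometry-prx}).

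For the first two parts I would argue structurally from the admissibility constraints of Theorem~\ref{prop:canonical-tube-readout}. Cut locality restricts the symmetry content available at the cut to line segments, endpoints, and junctions of \({\rm Tube}_{\mathcal C}(\mathcal M_A)\). An endpoint of \(N\) changes the boundary condition from \(N\triangleright b_{\rm src}\) to \(b_{\rm tgt}\). Any admissible endpoint operator must therefore be the image of an element of \({\rm Hom}_{\mathcal M_A}(N\triangleright b_{\rm src},b_{\rm tgt})\), since \(\mathsf R_A\) is the only dictionary between module data and subsystem operators. A square operator \(X\) on a single Hilbert space that is not of the form \(\mathsf R_A(\xi)\) would, after composing into a tube element, produce a readout outside the boundary tube algebra. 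By the no-go part of Theorem~\ref{thm:restricted-tube-discrimination}, such a readout violates cut locality or boundary-module covariance. Hence such an \(X\) is physical only if \(X=\mathsf R_A(\xi)\) for a morphism, or a composite like \(\mathsf R_A(\xi^\dagger\circ\xi)\).

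For the third part I would proceed concretely. The first step is to compute the Hom space by Frobenius reciprocity in the module category, showing \(\dim{\rm Hom}(N\triangleright b_{\rm src},b_{\rm tgt})=1\). The natural route uses \(N^2=1+a+b+c\) from (\ref{eq:product-kw-fusion-prx}) together with \(G\)-invariance of \(b_{\rm tgt}\), so that \(b_{\rm tgt}\) appears with multiplicity one in \(N\triangleright b_{\rm src}\). With the generator \(\eta_{\rm free}\) fixed up to phase, the second step is to evaluate the two bubbles, taking (\ref{eq:endpoint-bubbles-prx}) as given. These give \(V^\dagger V=(1+a+b+c)/4\) and \(VV^\dagger=I_{\rm target}\) for \(V=N_{\rm free}/2\). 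The third step is to check that \(P_{++}=(1+a+b+c)/4\) is the orthogonal projector onto the \(G\)-trivial block. This follows because \(g\mapsto\mathsf R_A(g)\) is a unitary representation of \(\mathbb Z_2\times\mathbb Z_2\), and the group average of such a representation is the projector onto its invariants. With that, \(V^\dagger V\) is a projector and \(VV^\dagger=I\), so \(V\) is a partial isometry from the trivial source block onto the target.

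The main obstacle is the rigidity claim in the first part, namely that nothing beyond \(\mathsf R_A\)-images is allowed. To avoid circularity, I would make it explicit that the admissible cone is defined so that the boundary-local symmetry content is represented by the tube algebra and its boundary morphisms. Under that definition the principle is a consequence of the definitions rather than a new dynamical result, and I would state it that way. Only the dimension count and the bubble algebra then require genuine computation.
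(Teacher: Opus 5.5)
The bubble algebra in your third part matches the paper. With $V=N_{\rm free}/2$, Eq.~\eqref{eq:endpoint-bubbles-prx} gives $V^\dagger V=P_{++}$ and $VV^\dagger=I_{\rm target}$ directly. Since $VV^\dagger=I_{\rm target}$ already makes $V^\dagger V$ idempotent, your group-averaging step is only a consistency check.

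\textbf{The gap: the dimension count.} You want $\dim{\rm Hom}_{\mathcal M_A}(N\triangleright b_{\rm src},b_{\rm tgt})=1$ from $N^2=1+a+b+c$ and $G$-invariance of the target, but these inputs do not fix the multiplicity. For a simple $G$-invariant module object $m$, Frobenius reciprocity and the fusion rule only give $\dim{\rm End}(N\triangleright m)=\sum_{g\in G}\dim{\rm Hom}(m,g\triangleright m)=4$. That allows either four inequivalent simple summands of multiplicity one, or a single simple summand of multiplicity two.

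The second case genuinely occurs with exactly these fusion rules. Take ${\rm Rep}(D_8)$ acting on ${\rm Vec}$ through its fiber functor. The unique simple $m$ is trivially $G$-invariant, $N\triangleright m=2m$, and ${\rm Hom}(N\triangleright m,m)$ is two-dimensional. This is the situation the paper itself flags: in other theories the endpoint Hom space can be larger and then carries extra boundary data.

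So one-dimensionality is a property of the particular doubled-Ising module, not of the fusion rules. The paper reads it off the explicit module action $N\triangleright b_{\rm src}=e_1\oplus e_2\oplus e_3\oplus e_4$, with target the $G$-invariant channel $v_{++}=\frac12(e_1+e_2+e_3+e_4)$, Eqs.~\eqref{eq:sm-n-action-on-source}--\eqref{eq:sm-free-endpoint-hom}. There the target is the symmetric combination of all four summands, not a single summand appearing once as in your picture. You need to input this module data rather than try to derive the count.

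\textbf{The square-closure clause.} Routing this through the no-go part of Theorem~\ref{thm:restricted-tube-discrimination} adds little, because that theorem constrains sector readouts. A Hermitian same-Hilbert closure $X$ can obey the same relations $gX=Xg=X$ and $X^2=1+a+b+c$. Then $\frac18(1+a+b+c)\pm\frac14 X$ are legitimate orthogonal projectors satisfying the tube relations. They are excluded only because $X$ is not an $\mathsf R_A$-image, which is the definitional point you end up conceding anyway.

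The paper's reason is more concrete. $N_{\rm free}$ maps $\mathcal H_{\rm source}\to\mathcal H_{\rm target}$, so any square closure needs an extra identification $\mathcal H_{\rm target}\to\mathcal H_{\rm source}$. Neither the boundary module nor the SymTFT boundary condition supplies one. Such a closure is therefore only a comparison branch, and it is ruled out by the projected-spectrum test, not by linear fusion checks.
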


We therefore do not claim that every noninvertible endpoint is the same
isometry; the universal statement is the functorial boundary-morphism origin,
while the isometry above is the minimal product-KW realization of that
statement.  Therefore a coherent positive endpoint only relabels the coarse
\(G\)-trivial weight:
\begin{equation}
  p_{N_+}=p_0,\qquad p_{N_-}=0,
  \label{eq:coherent-endpoint-prx}
\end{equation}
where \(p_0\) is the \(G\)-trivial weight.  This produces no new
distinguishability relative to the coarse \(G\)-resolution.

A genuine tube signal appears when the endpoint sign is physically produced
but not read out.  Let \(r\) be the probability of the positive endpoint sign.
The unread endpoint-sign instrument keeps the scalar overlap and all
\(G\)-character weights fixed, but changes the conditional distribution inside
the product-KW fiber:
\begin{equation}
  p_{N_+}^{(r)}=r p_0,
  \qquad
  p_{N_-}^{(r)}=(1-r)p_0 .
  \label{eq:endpoint-split-prx}
\end{equation}
Consequently the coherent endpoint \(r=1\) and the balanced unread endpoint
\(r=1/2\) are indistinguishable to scalar and \(G\)-only observers, but have
\begin{equation}
  D_{\rm TV}^{\rm tube}=\frac{p_0}{2},
  \qquad
  P_{\rm succ}^{\rm tube}-P_{\rm succ}^{G}=\frac{p_0}{4}.
  \label{eq:kw-tv-advantage-prx}
\end{equation}
The associated Shannon summary is
\begin{align}
  S_{\rm tube}(r)-S_G&=p_0 h_2(r),
  \label{eq:kw-entropy-summary-prx}\\
  h_2(r)&=-r\log r-(1-r)\log(1-r).
  \nonumber
\end{align}
Here \(S_G\) is the Shannon entropy of the coarse \(G\)-character
distribution, and \(P_{\rm succ}^{G}\) in
Eq.~\eqref{eq:kw-tv-advantage-prx} is the corresponding equal-prior success
probability using only coarse \(G\) labels.

The physical input is the entangling-cut boundary module.  Given that module,
Eqs.~\eqref{eq:endpoint-hom-prx}--\eqref{eq:rectangular-isometry-prx} determine
the endpoint prescription.  The construction is still falsifiable at the
microscopic level: the physical branch would fail if a stable \(N_-\)
projected spectrum appeared for the coherent endpoint, or if the projected
spectrum developed a finite off-tube fraction under finite-size refinement.
The numerical tests below are designed around exactly these failure modes.

\section{Summary Diagrams and Product-KW Mechanism}

The main figures are organized as compact summary diagrams.  Each figure
corresponds to one logical step: extremal selection of the admissible readout
algebra, conditional distinguishability inside a tube fiber, and the
operational hypothesis-testing advantage.  Quantitative finite-size checks,
projected-spectrum falsifiability tests, R\'enyi curves, local-quench CFT
support, and data provenance are kept in the appendices.

We first characterize the space of physically admissible measurements for
reduced overlap operators.  As shown in
Fig.~\ref{fig:measurement-hierarchy}, positivity, locality at the entanglement
cut, and compatibility with boundary-module symmetry actions restrict the
observable algebra to a commutative subalgebra identified with the center of a
boundary tube algebra.  This establishes a hierarchy in which the scalar
positive-overlap diagnostic
and symmetry-resolved observables arise as successive coarse grainings.

\begin{center}
\refstepcounter{figure}\label{fig:measurement-hierarchy}
\setlength{\fboxsep}{5pt}
\fbox{%
\begin{minipage}{0.88\columnwidth}
\centering
\small
\(\mathcal B(\mathcal H_A)\)\\[-1pt]
\(\Downarrow\)\quad{\scriptsize positivity \(+\) cut locality}\\[-1pt]
\textbf{cut-local positive readouts}\\[-1pt]
\(\Downarrow\)\quad{\scriptsize boundary-module covariance \(+\) composition}\\[-1pt]
\(\displaystyle
\mathcal A_{\rm phys}
=Z\!\left({\rm Tube}_{\mathcal C}(\mathcal M_A)\right)
\)\\[-1pt]
\(\Downarrow\)\quad{\scriptsize
\(\pi:\mathcal T\to\mathcal Q\) and terminal pushforward}\\[2pt]
\begin{tabular}{c@{\quad}c}
\(\mathcal Q\)-resolved data \(p_q\) &
scalar overlap \(\Tr O_A\)
\end{tabular}
\end{minipage}}
\smallskip
\begin{minipage}{\columnwidth}
\small
\textbf{FIG.~\thefigure.} Measurement algebra hierarchy induced by the
boundary module.  The physically admissible measurement algebra for reduced
overlap is constrained by positivity, locality at the entanglement cut, and
compatibility with boundary-module symmetry actions.  These constraints select
the center of the boundary tube algebra as the maximal commutative observable
algebra.  Scalar positive-overlap and symmetry-resolved observables arise as
successive coarse grainings via forgetful maps.
\end{minipage}
\end{center}

To assess the information loss associated with conventional diagnostics, we
consider two preparation protocols that are indistinguishable under both
scalar positive-overlap and symmetry-resolved sector weights.  As illustrated in
Fig.~\ref{fig:conditional-fiber-distinguishability}, this degeneracy is resolved
when the overlap is decomposed into tube sectors, where differences emerge in
the conditional distributions within symmetry fibers.  This defines a strictly
finer notion of quantum-state distinguishability.

\begin{center}
\refstepcounter{figure}\label{fig:conditional-fiber-distinguishability}
\setlength{\fboxsep}{5pt}
\fbox{%
\begin{minipage}{0.88\columnwidth}
\centering
\small
\begin{tabular}{c@{\qquad}c}
protocol \(\lambda\) & protocol \(\lambda'\)
\end{tabular}\\[-1pt]
\(\Downarrow\)\\[-1pt]
same scalar overlap \(\Tr O_A\)\\[-1pt]
same \(\mathcal Q\)-sector weights \(p_q\)\\[-1pt]
\(\Downarrow\)\quad{\scriptsize indistinguishable after coarse graining}\\[2pt]
\textbf{tube-sector readout \(\alpha\in\mathcal T\)}\\[-1pt]
\(\Downarrow\)\\[-1pt]
\(\displaystyle
p_{\alpha|q}(\lambda)\ne p_{\alpha|q}(\lambda')
\quad\Longrightarrow\quad
D_{\rm TV}^{\rm tube}>0
\)
\end{minipage}}
\smallskip
\begin{minipage}{\columnwidth}
\small
\textbf{FIG.~\thefigure.} Conditional distinguishability beyond symmetry-resolved
structure.  Scalar positive-overlap and symmetry-resolved sector weights may coincide
for two protocols, while their conditional distributions over tube sectors
differ.  The difference is invisible after the forgetful map to coarse
\(\mathcal Q\)-sector data, but becomes visible to the tube readout whenever
\(p_{\alpha|q}(\lambda)\ne p_{\alpha|q}(\lambda')\) on a nonzero-weight fiber.
This reveals a finer layer of distinguishability encoded in the
boundary-module-induced sector structure.
\end{minipage}
\end{center}

We next analyze the operational consequences of tube-sector resolution in
hypothesis testing.  As shown in
Fig.~\ref{fig:operational-hypothesis-test}, restricting measurements to scalar
or symmetry-resolved observables yields no distinguishing advantage when coarse
data coincide.  In contrast, access to tube-sector measurements yields a
strictly higher success probability, demonstrating an operational gain in
distinguishability arising from boundary-module structure.

\begin{center}
\refstepcounter{figure}\label{fig:operational-hypothesis-test}
\setlength{\fboxsep}{5pt}
\fbox{%
\begin{minipage}{0.88\columnwidth}
\centering
\small
\textbf{state-discrimination task: \(\lambda\) versus \(\lambda'\)}\\[2pt]
\begin{tabular}{c@{\quad}c}
scalar/\(\mathcal Q\) readout &
tube POVM \(\{P_\alpha\}\)\\[2pt]
\(\displaystyle P_{\rm succ}^{\mathcal Q}=\frac12\) &
\(\displaystyle
P_{\rm succ}^{\rm tube}
=\frac12\!\left(1+D_{\rm TV}^{\rm tube}\right)
\)
\end{tabular}\\[2pt]
\(\displaystyle
\Delta P
=P_{\rm succ}^{\rm tube}-P_{\rm succ}^{\mathcal Q}
=\frac{D_{\rm TV}^{\rm tube}}{2}>0
\)
\end{minipage}}
\smallskip
\begin{minipage}{\columnwidth}
\small
\textbf{FIG.~\thefigure.} Operational advantage in hypothesis testing.  In the
task of distinguishing two state-preparation protocols, scalar and
symmetry-resolved measurements provide no advantage when sector weights
coincide.  Tube-sector measurements, however, yield a strictly higher success
probability whenever conditional distributions differ within a symmetry fiber:
\(P_{\rm succ}^{\rm tube}=\frac12(1+D_{\rm TV}^{\rm tube})\) and
\(\Delta P=P_{\rm succ}^{\rm tube}-P_{\rm succ}^{\mathcal Q}
=D_{\rm TV}^{\rm tube}/2\).
\end{minipage}
\end{center}

In the doubled-Ising product-KW benchmark these diagrams lead to
parameter-free microscopic consistency checks.  Once the coarse \(G\)-trivial weight
\(p_0(t)\) is measured, the composition-stable tube readout predicts
\[
  p_{N_-}^{\rm coh}(t)=0,\qquad
  Z_{\rm off}^{(4)}(t)=0,\qquad
  D_{\rm TV}^{\rm tube}(t)=\frac{p_0(t)}2
\]
for the coherent endpoint, projected-spectrum test, and
coherent-versus-unread endpoint comparison, respectively.  The appendices report the corresponding finite-size, endpoint-size, R\'enyi-family,
and local-quench CFT checks.  In particular, Appendix Figs.~\ref{fig:sm-discriminator-original}--\ref{fig:sm-conditional-discrimination-original} give the numerical support for the three summary diagrams:
the tube discriminator, the physical free-endpoint projected spectrum, and the
conditional tube-sector discrimination signal.

\section{Operational Meaning}

The endpoint-sign protocol should be read as a measurement statement.  The
endpoint sign is a boundary record.  If that record is kept, the two coherent
branches merely relabel the \(G\)-trivial sector as \(N_+\) or \(N_-\).  If it
is produced and then forgotten, the resulting channel changes the conditional
distribution inside the product-KW fiber while leaving all coarse data fixed.
The tube-sector readout is therefore a restricted-measurement algebra that can
access endpoint information unavailable to scalar positive-overlap or ordinary
group-resolved measurements.

Equivalently, the implementable object is a family of quantum instruments whose
classical shadow is the POVM \(\{P_\alpha\}\) on the positive-overlap
statistics.  In a replica or swap implementation, the instrument has a
trace-preserving summed channel on the required copies together with an ancilla
that realizes the boundary-module endpoint register, followed by a classical
readout \(\alpha\); the probability of this outcome is the positive functional
\(\Tr(P_\alpha O_A)/\Tr O_A\).  Thus \(O_A\) is not treated as an observable to
be measured directly.  The coarse \(G\)-resolved measurement is obtained by the
stochastic map \(\alpha\mapsto q=\pi(\alpha)\), and the scalar overlap is
obtained by forgetting all outcomes.  Total variation distance cannot increase
under this pushforward.  Therefore the gap
\[
  D_{\rm TV}^{\rm tube}-D_{\rm TV}^{G}
\]
is precisely the restricted-measurement discrimination power stored in tube
fibers and lost by the coarse measurement.  The effect is thus a
hypothesis-testing advantage for a specified admissible instrument class, not a
formal entropy renaming.

A concrete implementation is available in the same language as lattice
topological defects and tensor networks~\cite{LevinWen:2005,Kitaev:2006,Bultinck:2017mnp}.  One represents the generalized
symmetry lines by defect matrix-product operators (defect MPOs), lets them
terminate on the entangling-cut
boundary module, couples the resulting endpoint degree of freedom to the
boundary-module register, and resolves that register by the idempotent network
for \(e_\alpha\).  The sector weight is then estimated as
\(\Tr(P_\alpha O_A)\), or by its projected positive-overlap moment, using the
same replica, swap, or Schmidt-factor estimators used for symmetry-resolved
entanglement.  The ordinary \(G\)-readout is obtained by summing the tube
outcomes in each fiber, while an unread endpoint-sign instrument is the
physical operation of producing the endpoint record and tracing, dephasing, or
classically forgetting it.  No full tomography of \(O_A\) is required; the
protocol only estimates the sector weights associated with the chosen readout
algebra.  The matrix-product-operator realization of tube idempotents follows
the standard MPO/tube-algebra construction~\cite{Bultinck:2017mnp}.  This is
the sense in which the tube center is physically
implementable: it specifies the extremal classical record of a cut-local
instrument, not an arbitrary decomposition of \(\mathcal H_A\).

This also explains why the present construction is intentionally based on the
positive overlap operator rather than on a transition matrix or pseudo-entropy
diagnostic.  Transition-matrix constructions compare two states at the
amplitude level and can be non-Hermitian~\cite{Nakata:2020luh,Mollabashi:2020yie,Goto:2021,Nishioka:2021cxe}.
Here \(O_A\) is positive.  The price is that complement identities familiar
from pure-state entanglement need not hold; the gain is a stable positive
object whose sector decomposition has a direct readout interpretation.

\section{Discussion}

The main conclusion is the following single physical point.  In a system with
noninvertible symmetry, the physically admissible classical measurements of a
reduced positive overlap are not exhausted by scalar positive-overlap or by
ordinary group-resolved sectors.  The entangling cut carries a boundary module, and the
center of its boundary tube algebra is the maximal commutative measurement
algebra compatible with positivity, locality, covariance, and composition.
The established tube and boundary-module structures are therefore the
kinematic input; the added physical content is their operational role as the
measurement algebra for \(O_A\).

This yields a genuine distinguishability hierarchy.  The scalar positive-overlap
diagnostic collapses
all tube-center outcomes.  Ordinary symmetry resolution keeps only a quotient
of them.  Noninvertible tube resolution can keep conditional endpoint data
inside a coarse symmetry fiber.  In the doubled-Ising product-KW example, the
fiber is \(\{N_+,N_-\}\), and an unread endpoint-sign instrument changes only
the conditional distribution on this fiber.  The irreducibility witness is the
kernel of the tube-to-coarse pushforward: two protocols can agree on every
coarse datum while differing in a tube-fiber hypothesis test.

This establishes a hierarchical structure of quantum distinguishability in
which a nontrivial noninvertible tube fiber introduces a strictly finer layer
of measurement refinement beyond group-theoretic symmetry resolution:
\[
  \begin{aligned}
  \text{tube-sector resolution}
  &\longrightarrow
  \text{ordinary symmetry resolution}
  \\
  &\longrightarrow
  \text{scalar overlap}.
  \end{aligned}
\]
Each arrow is a pushforward of measurement outcomes, and each lost fiber can
carry operational information for state or protocol discrimination.  In this
sense the result extends symmetry-resolved quantum information beyond the
traditional group-based setting.

The role of the physical endpoint is crucial.  A coherent topological endpoint
does not by itself create extra entropy or extra distinguishability; it
relabels the \(G\)-trivial projected spectrum through the rectangular
boundary-changing isometry.  The additional signal appears only when an
endpoint record is operationally produced and then unread.  This distinction
keeps the result from being a choice of defect notation: it is falsified by a
stable coherent-endpoint \(N_-\) branch or by finite off-tube projected
spectral weight.

The appendices collect the supporting machinery: the
boundary-tube coefficients, finite-size and endpoint-size
robustness, R\'enyi-family formulas showing that the response is spectral
rather than Shannon-specific, complement-anomaly checks, and a local-quench CFT
support calculation~\cite{Calabrese:2004eu,Calabrese:2005in,Calabrese:2007mtj,Nozaki:2014hna,He:2014mwa,Guo:2018lzz} of the form
\[
  p_0^{(m)}(t)=\mathcal F_{\rm CFT}^{(m)}(x(t),\bar x(t);\Phi).
\]
Here \(\Phi\) is the local quench primary and \(x,\bar x\) are the continuum
cross-ratios.
Those results support the interpretation of the numerical curves as a
conformal-block mechanism in the scaling regime, dressed by microscopic
entangling-cut geometry.
All doubled-Ising and CFT computations in this work serve as benchmark
realizations and consistency checks of the variational readout principle; they
are not assumptions in its derivation.

The universality claimed here is therefore structural and covariant rather
than a numerical scaling law across all noninvertible theories.  It is the
functorial tube-to-coarse mechanism of
Eq.~\eqref{eq:coarse-and-conditional}, together with the statement that
boundary-module equivalences preserve the tube weights, not the detailed Ising
spectrum.  Within a fixed boundary-module equivalence class this gives
parameter-free predictions for which fibers are visible or invisible to a
given measurement algebra.

This statement is not tied to the Ising fusion rules.  For example, in a
Fibonacci anyonic chain with \(\tau\times\tau=1+\tau\), the invertible symmetry
group is trivial, so any nontrivial boundary-tube fiber would be invisible to
ordinary group resolution from the outset~\cite{Feiguin:2006yd}.  In
Tambara-Yamagami or three-state Potts duality settings, noninvertible duality
defects coexist with nontrivial invertible sectors, giving a closer analogue of
the product-KW separation but with different endpoint Hom spaces
~\cite{TambaraYamagami:1998,Mong:2014ova}.  In both cases the theorem gives the same falsifiable
recipe: identify \(\mathcal M_A\), compute
\(Z({\rm Tube}_{\mathcal C}(\mathcal M_A))\), and test whether a proper
coarse readout discards a conditional tube distribution.

Across different fusion categories the proposal is therefore a sharp
falsifiable program rather than an assumption.  The next test is not more
Ising algebra, but broader scope.
Three-state Potts duality defects and Fibonacci anyonic chains provide natural
benchmarks with larger and less group-like tube fibers.
The question for each case is now sharp: identify the entangling-cut boundary
module, construct the tube idempotents, determine the relevant endpoint Hom
spaces, and test whether a measurement protocol can keep scalar and coarse
symmetry data fixed while changing a tube-fiber distribution.  A positive
answer would show that conditional tube-sector discrimination is a generic
feature of noninvertible symmetry; a negative answer would identify boundary
modules in which tube resolution collapses back to ordinary coarse data.

\begin{acknowledgments}
This work was supported by National Natural Science Foundation of China (NSFC)
Grant Nos. 12475053, 12588101, and 12235016,
and the sub-project funding for ``Gravitational Redshift Measurement
Scientific Experiment and Frontier Research in Gravitational Physics'' of the
Chinese Academy of Sciences, the Strategic Priority Research Program on Space
Science, the Chinese Academy of Sciences (XDA30040000, XDA30030000).
\end{acknowledgments}

% --- Appendices for PRX/arXiv submission ---
\clearpage
\onecolumngrid
\appendix
\setcounter{section}{0}
\setcounter{subsection}{0}
\setcounter{equation}{0}
\setcounter{figure}{0}
\setcounter{table}{0}
\renewcommand{\thesection}{S\arabic{section}}
\renewcommand{\thesubsection}{\thesection.\arabic{subsection}}
\renewcommand{\theequation}{S\arabic{section}.\arabic{equation}}
\renewcommand{\thefigure}{S\arabic{section}.\arabic{figure}}
\renewcommand{\thetable}{S\arabic{section}.\arabic{table}}
\renewcommand{\theHsection}{S.\arabic{section}}
\renewcommand{\theHsubsection}{S.\arabic{section}.\arabic{subsection}}
\renewcommand{\theHequation}{S.\arabic{section}.\arabic{equation}}
\renewcommand{\theHfigure}{S.\arabic{section}.\arabic{figure}}
\renewcommand{\theHtable}{S.\arabic{section}.\arabic{table}}
\begin{center}
{\large\bf Appendices}
\end{center}

\section{Definitions, Normalizations, and Sector Moments}

These appendices record the definitions, validation checks, and data provenance
used in the manuscript.  The positive overlap operator is
\begin{equation}
  O_A(\rho,\sigma)=\sqrt{\rho_A}\,\sigma_A\sqrt{\rho_A}.
  \label{eq:sm-positive-overlap}
\end{equation}
Here \(\rho\) and \(\sigma\) are the two global states being compared, and
\(\rho_A,\sigma_A\) are their reductions to the subsystem \(A\).  In the
finite-size checks, \(L\) denotes the total chain length and \(L_A\) the
subsystem length.  The moment index is \(m\), with projected-spectrum even
moments written as \(m=2k\); R\'enyi entropies use the independent index \(n\).
The operator \(O_A\) is not assumed to be directly measured as a physical
observable.  The operational quantities are positive functionals such as
\(\Tr(P_\alpha O_A)\), estimated by replica, swap, Schmidt-factor, or
matrix-product-operator (MPO)
readout protocols for the chosen sector projector.
The Ising spin primary is also denoted by \(\sigma\), but it always appears as
a field label or fusion label, not as a density matrix.
The generalized symmetry category is denoted by \(\mathcal C\), and
\(\mathcal M_A\) denotes the entangling-cut boundary module.  The corresponding
boundary tube algebra is \({\rm Tube}_{\mathcal C}(\mathcal M_A)\).  Its
primitive central idempotents are \(e_\alpha\), represented on the subsystem by
projectors \(P_\alpha=\mathsf R_A(e_\alpha)\).  The set of tube labels is
\(\mathcal T=\{\alpha\}\), the set of coarser labels is \(\mathcal Q=\{q\}\),
and \(\pi:\mathcal T\to\mathcal Q\) is the forgetful map with stochastic
pushforward \((\pi_*p)_q=\sum_{\alpha:\pi(\alpha)=q}p_\alpha\).  The admissible
quantum instruments are denoted by \(\mathfrak C_{\rm sym}\), while the induced
commutative sector-readout algebras form \(\mathfrak M_{\rm adm}\).  The
physical readout algebra is
\(\mathcal A_{\rm phys}=Z({\rm Tube}_{\mathcal C}(\mathcal M_A))\).
The organization follows the logical claim structure: Appendices B--D give the
product Kramers--Wannier (product-KW) tube idempotents, boundary-module
endpoint derivation, endpoint-sign instrument, irreducibility proof, and
general tube-to-coarse fiber formulas;
Appendices E--G give the
projected-spectrum robustness tests, projector-hierarchy controls, complement
anomaly, local-quench conformal field theory (CFT) support, and
reproducibility index.
In particular, Appendix B supplies the measurement-algebra selection argument for
the center of the boundary tube algebra, Appendix C gives the operational
endpoint-sign instrument and the numerical support for the compact
summary diagrams, and Appendix D proves the tube-to-coarse irreducibility statement
underlying the strict readout hierarchy.

For a categorical sector projector \(P_\alpha\), the projected moments used in
the numerical checks are
\begin{equation}
  Z_\alpha^{(2)}
  =
  \Tr(P_\alpha O_A P_\alpha),
  \qquad
  Z_\alpha^{(4)}
  =
  \Tr\!\left[(P_\alpha O_A P_\alpha)^2\right].
  \label{eq:sm-projected-moments}
\end{equation}
The normalized weights are
\begin{equation}
  p_\alpha^{(m)}
  =
  \frac{Z_\alpha^{(m)}}{\sum_\beta Z_\beta^{(m)}}.
  \label{eq:sm-sector-weights}
\end{equation}
For the endpoint-mixture entropy response we use both the Shannon entropy and
the R\'enyi family
\begin{equation}
  S^{(n)}_{\rm tube}
  =
  \frac{1}{1-n}\log\sum_\alpha \left(p_\alpha\right)^n,
  \qquad n>0,\quad n\ne1,
  \label{eq:sm-renyi-definition}
\end{equation}
with the \(n\to1\) and \(n\to\infty\) limits taken separately.

\section{Boundary Tube Algebra and Product-KW Endpoint Maps}

This section gives the technical construction behind the tube readout and its
basis independence.  The tube-sector labels are basis independent in the
following precise sense.  For a
fixed physical entangling-cut boundary module \(\mathcal M_A\) for a
generalized symmetry category \(\mathcal C\), the boundary tube algebra
\({\rm Tube}_{\mathcal C}(\mathcal M_A)\) is finite dimensional and
semisimple in the rational examples considered here
~\cite{Ocneanu:1988,EvansKawahigashi:1998,Mueger:2003,Fuchs:2002cm,Etingof:2005,Ostrik:2003,KitaevKong:2012,Bultinck:2017mnp}.
Locality at the cut
places generalized-symmetry endpoint operators in this tube algebra.  A
classical sector readout is more restrictive: it must be positive,
repeatable, and invariant under the boundary-module action.  Naturality under
that action restricts physical sector labels to the center.  Positivity and a
classical sector record require mutually orthogonal central projectors.
Stability under sequential composition of sector readouts requires these
projectors to form a finite commutative projector algebra.  Thus the
primitive central idempotents form the maximal set of mutually exclusive
central outcomes, and any coarser readout is obtained by adding them.  If one
changes the microscopic representative, or passes to a
Morita-equivalent presentation of the same physical boundary condition, the
tube algebra
representation is conjugated or equivalently transported, and the idempotents
are carried to the corresponding idempotents in the equivalent representation.
Hence \(\Tr(P_\alpha O_A)\) and the projected moments are unchanged.  If one
forgets part of the tube label, or restricts from an enlarged symmetry category
to a smaller one, the result is a pushforward of the tube distribution.  This
is the technical sense in which the tube readout is canonical, while ordinary
group resolution is a quotient of it.
Here maximality is meant in the classical-postprocessing preorder of readouts:
readout \(\mathsf R_1\) refines \(\mathsf R_2\) when the outcomes of
\(\mathsf R_2\) are obtained from those of \(\mathsf R_1\) by summing,
forgetting, or stochastic classical processing.  In this preorder, the
primitive central-idempotent readout is maximal because every commutative
positive sector readout compatible with the boundary-module action is a
partition of these primitive outcomes.

\subsection{Admissible Readout Cone and Non-Redundancy}

We spell out the optimization statement used in the main text.  A cut-local
sector readout is represented by a finite commutative positive-operator-valued
measure (POVM) algebra
\(\mathcal A={\rm span}\{E_x\}\) acting on the positive-overlap statistics
through
\begin{equation}
  \mathcal M_{\mathcal A}(O_A)_x
  =
  \frac{\Tr(E_xO_A)}{\Tr O_A}.
  \label{eq:sm-readout-channel}
\end{equation}
The admissible cone \(\mathfrak M_{\rm adm}\) consists of those readouts that
are positive, implemented by quantum instruments whose outcome maps are
completely positive and whose summed channel is trace preserving, localized at
the entangling cut, covariant under the \(\mathcal C\)-module action on
\(\mathcal M_A\), and stable under sequential instruments.  Stability means
that if two sector instruments are allowed, then their sequential composition
and any classical postprocessing of their records are again allowed.
For a protocol family \(\mathcal P\), and writing \(D_{\rm TV}\) for
total-variation (TV) distance, define
\begin{equation}
  \mathcal F_{\mathcal P}(\mathcal A)
  =
  \sup_{\lambda,\lambda'\in\mathcal P}
  \frac12\!\left[
  1+
  D_{\rm TV}\!\left(
  \mathcal M_{\mathcal A}(O_A^\lambda),
  \mathcal M_{\mathcal A}(O_A^{\lambda'})
  \right)
  \right].
  \label{eq:sm-restricted-functional}
\end{equation}
This is the restricted hypothesis-testing functional optimized in the main
text.  It is monotone under readout refinement and nonincreasing under
classical stochastic postprocessing.  The cone is closed under classical
randomization of instruments: keeping the randomization record gives a
direct-sum readout, whereas discarding the record is a postprocessing.  Hence
the relevant maximizers are extremal sector records, not arbitrary choices of
basis in \(\mathcal B(\mathcal H_A)\).

Equivalently, \(\mathfrak C_{\rm sym}\) and its induced readouts
\(\mathfrak M_{\rm adm}\) define a resource theory of tube measurement in the
standard operational sense of quantum and measurement resource
theories~\cite{Coecke:2014rsc,Chitambar:2018eyp,Oszmaniec:2019qme,Guff:2019rtm,Buscemi:2019qip,Guhne:2021ime}.
The allowed quantum operations are instruments with completely positive outcome
maps and trace-preserving summed channel that are cut-local and
\(\mathcal C\)-module-covariant; this symmetry-constrained free class is the
analogue of superselection and reference-frame restrictions in quantum
information~\cite{Bartlett:2006rf}.  The free classical operations are
relabeling, randomization, and stochastic postprocessing of the sector record.
For a fixed
coarse map \(\pi:\mathcal T\to\mathcal Q\), the resource carried by a pair of
tube distributions can be written as
\begin{equation}
  \mathcal R_\pi(p,p')
  =
  D_{\rm TV}(p,p')
  -
  D_{\rm TV}(\pi_*p,\pi_*p') .
  \label{eq:sm-tube-resource-monotone}
\end{equation}
It is nonnegative by data processing and is the loss of binary
total-variation distinguishability under \(\pi\).  It vanishes for a given
pair when the coarse map preserves the total-variation distance for that pair;
in the special case \(\pi_*p=\pi_*p'\), this means \(p=p'\) within every
fiber.
Within this restricted readout theory, the free operations form a closed convex
monoid under composition, classical randomization, and stochastic
postprocessing.

These assumptions force a no-go statement for alternative sector readouts.
Locality places all generalized-symmetry endpoint operations in the boundary
tube algebra.  A putative readout using operators outside this algebra is not
cut-local.  Module covariance requires a physical classical label to commute
with all boundary-module intertwiners; a noncentral label depends on a
presentation of the module and is not an admissible symmetry-compatible
outcome.  Finally, positivity and sequential repeatability require a resolution
by mutually orthogonal central idempotents.  Operators that are not closed under
sequential composition do not define a stable classical instrument record.
Because the center is finite semisimple, every commutative positive readout in
\(\mathfrak M_{\rm adm}\) is therefore a partition of the primitive central
idempotents \(\{e_\alpha\}\).  Therefore
\begin{equation}
  \mathcal A
  \preceq
  Z\!\left({\rm Tube}_{\mathcal C}(\mathcal M_A)\right)
  \qquad
  \forall\,\mathcal A\in\mathfrak M_{\rm adm},
  \label{eq:sm-nonredundancy}
\end{equation}
where \(\preceq\) denotes classical postprocessing.  Equality is achieved by
the tube-center readout itself.  Hence the tube readout is not a basis choice in
\(\mathcal B(\mathcal H_A)\); it is the maximal element of the physically
admissible readout cone.  Since total variation distance is monotone under
classical postprocessing, this same readout also maximizes
\(\mathcal F_{\mathcal P}\) among admissible sector instruments.  If
\(\mathcal P\) contains protocol pairs that separate every nontrivial tube
fiber, then every proper coarse graining loses one such pair and the maximizer
is unique in the postprocessing preorder.  This is the closed variational
principle used in the main text: admissible instruments factor through the
tube-center record, and any reduction of that record is a free operation that
cannot increase the restricted hypothesis-testing functional.
Equivalently, for every \(\mathcal A\in\mathfrak M_{\rm adm}\) there is a free
classical postprocessing map
\(\mathsf R_{\mathcal A}\in{\rm FreeOps}\) such that
\begin{equation}
  \mathcal M_{\mathcal A}
  =
  \mathsf R_{\mathcal A}\circ
  \mathcal M_{\mathcal A_{\rm phys}} .
  \label{eq:sm-factor-through-aphys}
\end{equation}
Here \({\rm FreeOps}\) denotes the relabeling, randomization, and stochastic
postprocessing maps of the sector record.
This is the explicit completeness statement for the variational principle.

\subsection{Category-Independent Scope and Non-Ising Benchmarks}

The preceding argument uses only the finite semisimplicity of the boundary
tube algebra, the existence of a boundary module \(\mathcal M_A\), and the
postprocessing preorder of classical readouts.  It does not use the Ising
fusion coefficients except when we choose the doubled-Ising product-KW model as
the smallest explicit benchmark.  For a general fusion category \(\mathcal C\),
the test is:
\[
  \mathcal M_A
  \longrightarrow
  {\rm Tube}_{\mathcal C}(\mathcal M_A)
  \longrightarrow
  Z({\rm Tube}_{\mathcal C}(\mathcal M_A))
  \longrightarrow
  \pi:\mathcal T\to\mathcal Q .
\]
Within this class of admissible sector readouts, the extremal readout algebra is
\[
  \mathcal A_{\rm phys}
  =
  Z({\rm Tube}_{\mathcal C}(\mathcal M_A)),
\]
provided the same admissibility conditions are imposed: complete positivity of
the cut-local implementing instruments, boundary-module covariance, and sequential
stability.  Thus Ising-specific data enter only in the choice of
\(\mathcal C\), \(\mathcal M_A\), and the protocol family, not in the
variational principle.
If \(\pi\) has a nontrivial fiber and the protocol family can vary the
conditional distribution inside that fiber at fixed pushforward, then
Eq.~\eqref{eq:sm-strict-functional-gap} gives a strict restricted
hypothesis-testing gap.  If no such fiber or no such protocol exists, the
theory predicts no tube-sector advantage for that measurement class.

This gives immediate non-Ising benchmarks.  In a Fibonacci anyonic chain,
\(\tau\times\tau=1+\tau\), there is no nontrivial invertible group resolution;
therefore any nontrivial tube fiber selected by an entangling-cut boundary
module would be a purely categorical readout beyond scalar overlap.  In a
Tambara-Yamagami or three-state Potts duality setting~\cite{TambaraYamagami:1998}, invertible sectors and
noninvertible duality defects coexist, so the same scalar/coarse/tube hierarchy
can be tested with endpoint Hom spaces that differ from product-KW.  These
examples are not assumed in the proof; they are the next category-independent
falsifiability tests of the variational readout principle.

In the doubled-Ising benchmark the invertible subgroup is
\(G=\mathbb Z_2\times\mathbb Z_2=\{1,a,b,c\}\), and the product
Kramers--Wannier defect \(N\) obeys
~\cite{KramersWannier:1941I,KramersWannier:1941II,KadanoffCeva:1971,Frohlich:2004ef,Aasen:2016dop}
\begin{equation}
  gN=Ng=N,
  \qquad
  N^2=1+a+b+c.
  \label{eq:sm-product-kw-fusion}
\end{equation}
The symbols \(1,a,b,c,N\) denote both topological-defect labels and the
corresponding endpoint-line operators in the displayed algebra.  Here \(g\)
runs over the invertible subgroup \(G\).
The lifted tube-sector projectors are
\begin{equation}
  P_{N_\pm}^{\rm tube}
  =
  \frac18(1+a+b+c)\pm\frac14N,
  \label{eq:sm-npm-projectors}
\end{equation}
with the three nontrivial \(G\)-character projectors
\begin{align}
  P_{+-}^{\rm tube}&=\frac14(1+a-b-c),\\
  P_{-+}^{\rm tube}&=\frac14(1-a+b-c),\\
  P_{--}^{\rm tube}&=\frac14(1-a-b+c).
\end{align}
Here \(P_{\pm\pm}^{\rm tube}\) project onto the \(G\)-character sectors with
the indicated eigenvalues under \(a\) and \(b\) (and \(c=ab\)), while
\(P_{N_\pm}^{\rm tube}\) are the two product-KW idempotents inside the
\(G\)-trivial fiber.
Thus product-KW resolution refines only the \(G\)-trivial fiber.

\subsection{Boundary-Module and SymTFT Origin of the Free Endpoint}

This subsection gives the product-KW realization of the physical endpoint
morphism principle.  The universal statement is that a noninvertible endpoint
is a boundary-module morphism before it is represented on a subsystem Hilbert
space.  In the doubled-Ising product-KW benchmark this
morphism becomes a rectangular source-target map, rather than an operator on a
single Hilbert space.
In module-category language, topological defects act on boundary conditions:
\[
  x:\ b\mapsto x\triangleright b .
\]
Equivalently, in the SymTFT description, a bulk topological line ending on a
topological boundary is a junction between two boundary conditions.  The
source boundary channel is
\[
  b_{\rm src}=e_0=(\sigma,\sigma),
\]
and the product-KW action gives
\begin{equation}
  N\triangleright b_{\rm src}
  =
  e_1\oplus e_2\oplus e_3\oplus e_4.
  \label{eq:sm-n-action-on-source}
\end{equation}
The free endpoint selects the \(G\)-invariant target channel
\begin{equation}
  b_{\rm tgt}=v_{++}
  =
  \frac12(e_1+e_2+e_3+e_4).
  \label{eq:sm-target-channel}
\end{equation}
Thus the endpoint is a junction
\begin{equation}
  \eta_{\rm free}
  \in
  {\rm Hom}_{\mathcal M}
  \!\left(
  N\triangleright b_{\rm src}, b_{\rm tgt}
  \right),
  \label{eq:sm-free-endpoint-hom}
\end{equation}
where \(\mathcal M\) is the boundary module generated by
\(\mathcal B_{\rm orb}\).  Reversing the endpoint orientation replaces
\(\eta_{\rm free}\) by its adjoint, so the convention in
Eq.~\eqref{eq:sm-free-endpoint-hom} is not part of the physical claim.  The
physical content is the pair of endpoint-bubble identities below.

Let \(\mathsf R_A\) denote the subsystem reduction, or folding, functor that represents
boundary-module morphisms on the finite subsystem Hilbert spaces.  Applying it
to the endpoint junction gives the source-target map
\[
  N_{\rm free}=\mathsf R_A(\eta_{\rm free}):
  \mathcal H_{\rm source}\longrightarrow \mathcal H_{\rm target}.
\]
Its two endpoint-bubble compositions live in different endomorphism algebras:
\begin{equation}
  N_{\rm free}^\dagger N_{\rm free}
  =
  1+a+b+c
  =
  4P_{++},
  \qquad
  N_{\rm free}N_{\rm free}^\dagger
  =
  4I_{\rm target}.
  \label{eq:sm-free-endpoint-bubbles}
\end{equation}
The first bubble is evaluated on the source module and projects onto the
\(G\)-trivial source block, with \(P_{++}=(1+a+b+c)/4\); the second bubble is
evaluated on the target endpoint module, where \(I_{\rm target}\) is the
identity.  These equations are the module/SymTFT origin of the rectangular map
used for the physical free endpoint.  In categories with higher-dimensional
endpoint Hom spaces, the corresponding endpoint state is additional boundary
data rather than a universal scalar choice.  In the present minimal product-KW
case, with \(V=N_{\rm free}/2\), the normalized endpoint satisfies
\begin{equation}
  V^\dagger V=P_{++},
  \qquad
  VV^\dagger=I_{\rm target}.
  \label{eq:sm-free-endpoint-isometry}
\end{equation}
The endpoint also intertwines the source \(G\)-trivial idempotent with the
positive product-KW tube idempotent in the target module:
\begin{equation}
  P_{N_+}^{\rm target}V=VP_{++}=V,
  \qquad
  P_{N_-}^{\rm target}V=0.
  \label{eq:sm-free-endpoint-intertwiner}
\end{equation}
This isometry maps the \(G\)-trivial source block into \(N_+\) and leaves
\(N_-\) empty.

A square same-Hilbert closure would require an additional identification
\(\mathcal H_{\rm target}\to\mathcal H_{\rm source}\).  Such an identification
is not supplied by the boundary module or by the SymTFT boundary condition.
Therefore a same-Hilbert closure may satisfy the fusion algebra, but it is only
a comparison branch and must pass the projected-spectrum test in
Sec.~\ref{sec:sm-projected-spectrum}.

\section{Endpoint-Sign Instrument and R\'enyi-Family Response}

Let \(V_+\) and \(V_-\) denote the two coherent endpoint-sign branches, with
\(V_+\) selecting the \(N_+\) tube branch and \(V_-\) selecting \(N_-\).  For
\(0\le r\le1\), the unread endpoint-sign instrument is implemented by
\begin{equation}
  K_+=\sqrt r\,\lvert+\rangle_R\otimes V_+,
  \qquad
  K_-=\sqrt{1-r}\,\lvert-\rangle_R\otimes V_-,
\end{equation}
followed by tracing out the endpoint register \(R\).  At the level of the
endpoint-resolved linear moment,
\begin{equation}
  Z_N^{(r)}
  =
  rZ_N^{(+)}+(1-r)Z_N^{(-)}
  =
  \frac12\Gsum(2r-1).
  \label{eq:sm-mixture-zn}
\end{equation}
Here \(Z_N^{(\pm)}\) are the linear product-KW moments of the pure endpoint-sign
branches and \(\Gsum\equiv Z_1+Z_a+Z_b+Z_c\).
Writing \(p_0=\Gsum/(4Z_1)\), the tube split is
\begin{equation}
  p_{N_+}^{(r)}=rp_0,
  \qquad
  p_{N_-}^{(r)}=(1-r)p_0.
  \label{eq:sm-mixture-split}
\end{equation}
Therefore the Shannon gain is
\begin{equation}
  \Delta S_{\rm tube}^{(1)}(t,r)
  =
  p_0(t)h_2(r),
  \qquad
  h_2(r)=-r\log r-(1-r)\log(1-r).
  \label{eq:sm-shannon-gain}
\end{equation}
For \(n>0\), \(n\ne1\), define
\(A_n(t)=\sum_{\chi\ne0}p_\chi(t)^n\), where \(\chi\) runs over the three
nontrivial \(G\)-character sectors.  The R\'enyi-family gain is
\begin{equation}
  \Delta S_{\rm tube}^{(n)}(t,r)
  =
  \frac{1}{1-n}
  \log
  \frac{
  A_n(t)+p_0(t)^n\left[r^n+(1-r)^n\right]
  }{
  A_n(t)+p_0(t)^n
  }.
  \label{eq:sm-renyi-gain}
\end{equation}

The same endpoint split also gives the sector-readout hypothesis test for the
endpoint protocols.  Two unread endpoint prescriptions with parameters \(r\) and
\(r'\) have the same scalar overlap and the same \(G\)-only weights, but their
product-KW tube distributions differ by
\begin{equation}
  D_{\rm TV}^{\rm tube}(r,r')
  =
  \frac12
  \sum_{\alpha=N_\pm}
  \left|p_\alpha^{(r)}-p_\alpha^{(r')}\right|
  =
  p_0|r-r'|,
  \qquad
  D_{\rm TV}^{G}=0.
  \label{eq:sm-tube-tv}
\end{equation}
For equal prior probabilities the corresponding optimal single-shot success
probability for this specified sector readout is
\begin{equation}
  P_{\rm succ}^{\rm tube}
  =
  \frac12\left(1+D_{\rm TV}^{\rm tube}\right).
  \label{eq:sm-sector-readout-success}
\end{equation}
This is a statement about the operational tube-label measurement, not a claim
about the trace distance between the full many-body states.

\subsection{MPO and Swap-Test Implementation}

The readout can be implemented without tomography of \(O_A\).  On the lattice,
the generalized symmetry defect is represented by an MPO segment ending on the
entangling-cut boundary module.  Inserting the idempotent network for
\(e_\alpha\) at the endpoint register realizes the sector projector
\(P_\alpha=\mathsf R_A(e_\alpha)\).  A replica or swap estimator then measures
the positive functional
\begin{equation}
  w_\alpha
  =
  \Tr(P_\alpha O_A),
  \qquad
  p_\alpha=\frac{w_\alpha}{\sum_\beta w_\beta}.
\end{equation}
Equivalently, one may view the procedure as an ancilla-assisted quantum
instrument: the endpoint register is coherently coupled to the defect MPO,
measured in the central-idempotent basis, and then either retained or
classically forgotten.  Summing the outcomes in each fiber implements the
ordinary \(G\)-resolved readout, while tracing the endpoint sign implements the
unread endpoint-sign channel.  Thus the tube readout is an operational
restricted measurement, not a post hoc decomposition of the Hilbert space.

\begin{figure}[t]
\centering
\includegraphics[width=0.86\textwidth]{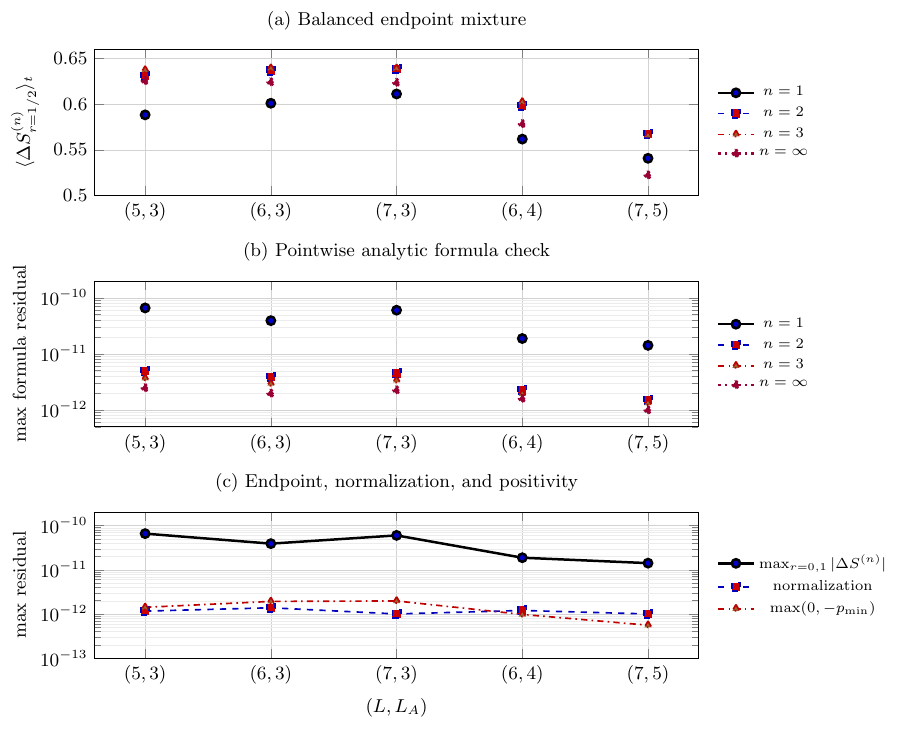}
\caption{Appendix robustness check for the R\'enyi-family endpoint-mixture
response.  The scan repeats the analytic endpoint-sign formula over
\((L,L_A)=(5,3),(6,3),(7,3),(6,4),(7,5)\), using raw \(N_+\) and \(N_-\) branch
data.}
\label{fig:sm-renyi-robustness}
\end{figure}

\subsection{Numerical Support for the Summary Diagrams}

The manuscript uses three compact summary diagrams to display the
measurement-algebra hierarchy, the conditional tube-fiber mechanism, and the
operational hypothesis test.  The figures in this subsection give the
corresponding numerical evidence.  They are deliberately kept here so that the
main text remains focused on the theorem-level logic while the raw
product-KW checks remain visible and reproducible.

\begin{figure}[t]
\centering
\includegraphics[width=0.86\textwidth]{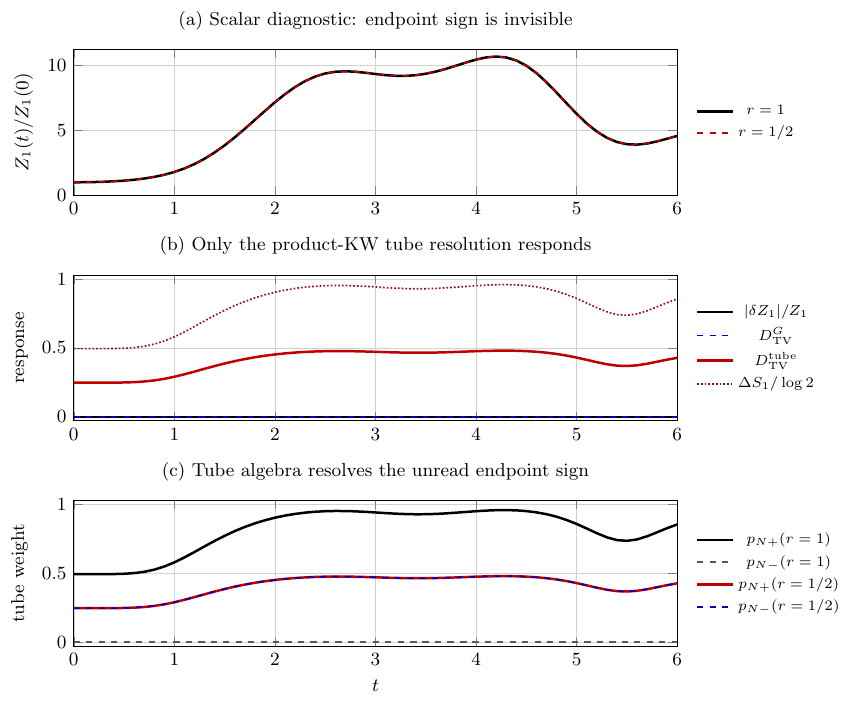}
\caption{Numerical support for the measurement-hierarchy and conditional-fiber
diagrams.  The coherent endpoint and the unread endpoint-sign instrument have
the same scalar overlap and the same \(G=\mathbb Z_2\times\mathbb Z_2\)
sector weights, but differ after resolving the \(G\)-trivial product-KW tube
fiber \(\{N_+,N_-\}\).  This is the finite-size realization of the
forgetful-map hierarchy: scalar and \(G\)-resolved readouts are blind, while
the tube readout detects the conditional fiber split.}
\label{fig:sm-discriminator-original}
\end{figure}

\begin{figure}[t]
\centering
\includegraphics[width=0.86\textwidth]{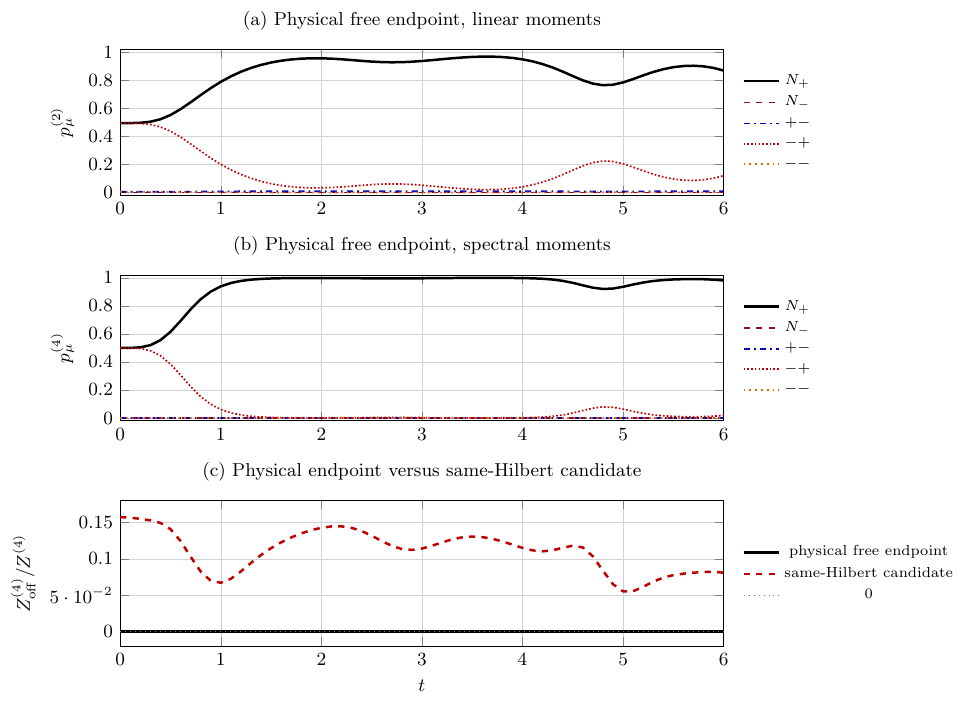}
\caption{Numerical support for the physical endpoint prescription underlying
the tube readout.  The physical free endpoint behaves as the rectangular
boundary-changing isometry: the coherent branch maps the \(G\)-trivial source
block into \(N_+\), leaves \(N_-\) empty, and removes the off-tube projected
\(m=4\) spectral component to numerical precision.  A same-Hilbert
product-KW closure is shown as a control; it can satisfy linear fusion checks
but retains off-tube spectral coherence.}
\label{fig:sm-physical-endpoint-projected-original}
\end{figure}

\begin{figure}[t]
\centering
\includegraphics[width=0.86\textwidth]{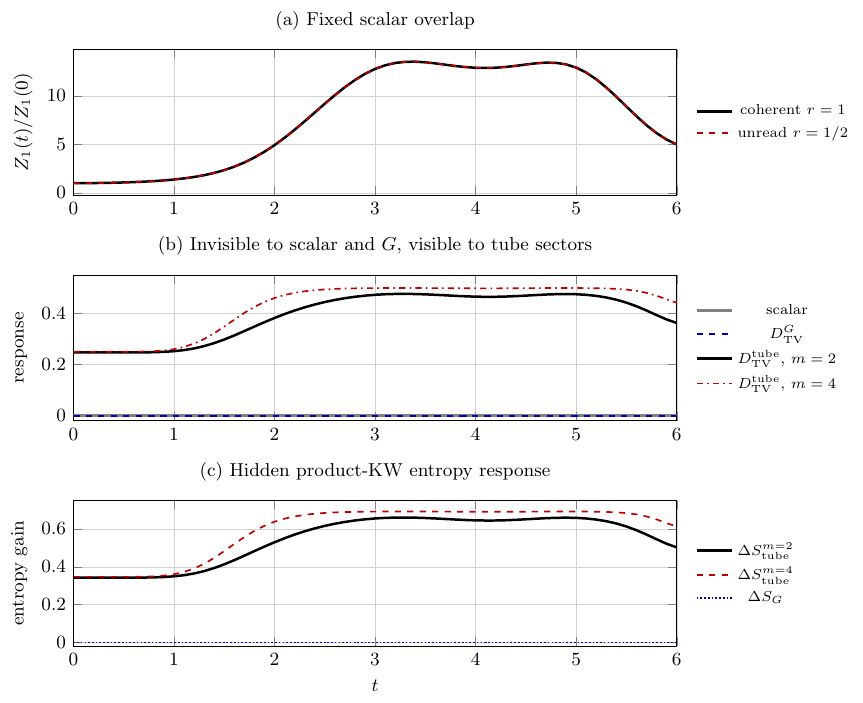}
\caption{Numerical support for the operational hypothesis-testing diagram.
For coherent versus balanced unread endpoint protocols, scalar and
\(G\)-resolved total-variation distances vanish, while the product-KW
tube-sector total variation is nonzero.  The signal is fixed by the measured
\(G\)-trivial weight \(p_0(t)\), giving the predicted success-probability gain
\(\Delta P=D_{\rm TV}^{\rm tube}/2\) for the sector-readout test.}
\label{fig:sm-conditional-discrimination-original}
\end{figure}

\section{General Fiber Derivation}
\label{sec:sm-general-fiber}

This section gives the abstract derivation of conditional tube discrimination
behind the product-KW endpoint calculation.  Let \(\mathcal T\) denote a tube-sector resolution and
\(\mathcal Q\) a coarser resolution, related by a forgetful map
\(\pi:\mathcal T\to\mathcal Q\).  For a protocol \(\lambda\), let
\(p_\alpha(\lambda)\) be the normalized tube-sector distribution.  The coarse
distribution and the conditional distribution inside each fiber are
\begin{equation}
  p_q(\lambda)
  =
  \sum_{\alpha:\pi(\alpha)=q}p_\alpha(\lambda),
  \qquad
  p_{\alpha|q}(\lambda)
  =
  \frac{p_\alpha(\lambda)}{p_q(\lambda)}
  \quad (p_q>0).
  \label{eq:sm-fiber-decomposition}
\end{equation}
Equivalently,
\begin{equation}
  p_\alpha(\lambda)
  =
  p_q(\lambda)p_{\alpha|q}(\lambda),
  \qquad
  \alpha\in\pi^{-1}(q).
\end{equation}
Thus any scalar or \(\mathcal Q\)-only diagnostic that depends only on the
coarse weights \(p_q\) cannot see changes of \(p_{\alpha|q}\) at fixed \(p_q\).

\subsection{Irreducibility relative to a coarse readout}

The previous statement can be formulated as a precise irreducibility claim.
Let
\[
  \pi_*:\mathbb R^{|\mathcal T|}\longrightarrow\mathbb R^{|\mathcal Q|}
\]
be the linear stochastic map defined by
\((\pi_*p)_q=\sum_{\alpha:\pi(\alpha)=q}p_\alpha\).  A scalar, group-resolved,
or defect-fusion diagnostic that factors through the coarse readout is a
function of \(\pi_*p\).  Therefore every deformation
\(\delta p\in\ker\pi_*\) is invisible to such a diagnostic:
\[
  \pi_*(p+\delta p)=\pi_*p,
  \qquad
  \sum_{\alpha:\pi(\alpha)=q}\delta p_\alpha=0
  \quad \forall q .
\]
If \(p+\delta p\) remains a normalized positive distribution and
\(\delta p\ne0\), the tube readout distinguishes the two distributions by
\[
  D_{\rm TV}^{\rm tube}(p,p+\delta p)
  =
  \frac12\sum_{\alpha\in\mathcal T}|\delta p_\alpha|>0,
  \qquad
  D_{\rm TV}^{\mathcal Q}=0 .
\]
This proves irreducibility relative to the chosen coarse measurement algebra:
no postprocessing of scalar or \(\mathcal Q\)-only data can reconstruct a
nonzero kernel component.  The statement does not claim that tube idempotents
are new categorical sectors; it claims that their kernel under the
tube-to-coarse pushforward can store operational distinguishability of
positive-overlap protocols.

For the product-KW fiber this kernel is one-dimensional.  At fixed
\(G\)-trivial weight \(p_0\), the deformation
\[
  \delta p_{N_+}=\epsilon p_0,\qquad
  \delta p_{N_-}=-\epsilon p_0
\]
has zero pushforward to the \(G\)-character readout but gives
\[
  D_{\rm TV}^{\rm tube}=|\epsilon|p_0 .
\]
The coherent endpoint \(r=1\) and the unread endpoint sign \(r=1/2\) correspond
to \(|\epsilon|=1/2\), giving \(D_{\rm TV}^{\rm tube}
=p_0/2\) while all scalar and \(G\)-resolved weights agree.

For the Shannon entropy,
\begin{align}
  S_{\rm tube}^{(1)}
  &=
  -\sum_{\alpha\in\mathcal T}p_\alpha\log p_\alpha \nonumber\\
  &=
  -\sum_{q\in\mathcal Q}
  \sum_{\alpha:\pi(\alpha)=q}
  p_qp_{\alpha|q}
  \log\!\left(p_qp_{\alpha|q}\right) \nonumber\\
  &=
  -\sum_q p_q\log p_q
  -
  \sum_q p_q
  \sum_{\alpha:\pi(\alpha)=q}
  p_{\alpha|q}\log p_{\alpha|q}.
\end{align}
Therefore
\begin{equation}
  S_{\rm tube}^{(1)}
  =
  S_{\mathcal Q}^{(1)}
  +
  \sum_{q\in\mathcal Q}p_q\,H(p_{\alpha|q}),
  \label{eq:sm-shannon-chain-rule}
\end{equation}
Here \(S_{\mathcal Q}^{(1)}=-\sum_qp_q\log p_q\), and
\(H(p_{\alpha|q})=-\sum_{\alpha\in\pi^{-1}(q)}
p_{\alpha|q}\log p_{\alpha|q}\) is the Shannon entropy inside the fiber over
\(q\).  This is the ordinary entropy chain rule applied to the forgetful
tube-to-coarse map.

The R\'enyi family is similarly spectral.  For \(n>0\), \(n\ne1\),
\begin{align}
  S_{\rm tube}^{(n)}
  &=
  \frac{1}{1-n}
  \log\sum_{\alpha\in\mathcal T}p_\alpha^n \nonumber\\
  &=
  \frac{1}{1-n}
  \log
  \sum_{q\in\mathcal Q}
  p_q^n
  R_q^{(n)},
  \qquad
  R_q^{(n)}
  =
  \sum_{\alpha:\pi(\alpha)=q}p_{\alpha|q}^{\,n}.
  \label{eq:sm-renyi-fiber-form}
\end{align}
Since
\begin{equation}
  S_{\mathcal Q}^{(n)}
  =
  \frac{1}{1-n}
  \log\sum_q p_q^n,
\end{equation}
the tube readout refinement over the coarse resolution is
\begin{equation}
  \Delta S^{(n)}
  =
  \frac{1}{1-n}
  \log
  \frac{\sum_q p_q^nR_q^{(n)}}{\sum_q p_q^n}.
  \label{eq:sm-general-renyi-refinement}
\end{equation}
Thus the refinement is not a peculiarity of Shannon entropy; it is a change
in the projected tube spectrum.

Now compare two protocols \(\lambda\) and \(\lambda'\) with the same coarse
distribution \(p_q(\lambda)=p_q(\lambda')\).  The total variation distance
visible to the coarse readout algebra
\[
  \mathcal A_{\mathcal Q}
  =
  {\rm span}\!\left\{
  \sum_{\alpha:\pi(\alpha)=q}P_\alpha
  \right\}_{q\in\mathcal Q}
\]
is zero, but the tube readout gives
\begin{align}
  D_{\rm TV}^{\rm tube}
  &=
  \frac12
  \sum_{\alpha\in\mathcal T}
  \left|
  p_\alpha(\lambda)-p_\alpha(\lambda')
  \right| \nonumber\\
  &=
  \frac12
  \sum_{q\in\mathcal Q}
  p_q
  \sum_{\alpha:\pi(\alpha)=q}
  \left|
  p_{\alpha|q}(\lambda)
  -
  p_{\alpha|q}(\lambda')
  \right|.
  \label{eq:sm-general-tv}
\end{align}
For equal prior probabilities, the optimal success probability for the
specified sector readout is
\begin{equation}
  P_{\rm succ}^{\rm tube}
  =
  \frac12\left(1+D_{\rm TV}^{\rm tube}\right),
  \qquad
  P_{\rm succ}^{\mathcal Q}
  =
  \frac12 .
  \label{eq:sm-general-success}
\end{equation}
The second equality uses the assumed equality of the coarse distributions
\(p_q\).  The gain \(D_{\rm TV}^{\rm tube}/2\) is therefore a classical
hypothesis-testing advantage for the chosen tube-label measurement, not a
new state metric.  Equivalently, for the target family
\(\mathcal P_\pi\) of pairs with identical coarse pushforward and different
conditional tube distributions,
\begin{equation}
  \mathcal F_{\mathcal P_\pi}(\mathcal A_{\mathcal Q})
  =
  \frac12
  <
  \mathcal F_{\mathcal P_\pi}(\mathcal A_{\rm phys}) .
  \label{eq:sm-strict-functional-gap}
\end{equation}
This strict inequality is the operational content of the no-go statement:
every admissible \(\mathcal Q\)-only readout has already discarded the
conditional tube record.  It is not a claim about the trace distance between
the full many-body states.

For the product-KW endpoint, the only nontrivial fiber is the \(G\)-trivial
one:
\begin{equation}
  \pi^{-1}(0)=\{N_+,N_-\}.
\end{equation}
Here \(0\) denotes the trivial \(G\)-character sector.
With endpoint-sign mixture parameter \(r\),
\begin{equation}
  p_{N_+}^{(r)}=rp_0,
  \qquad
  p_{N_-}^{(r)}=(1-r)p_0,
\end{equation}
so the conditional distribution in this fiber is simply
\((r,1-r)\).  All other fibers are singletons.  Equations
\eqref{eq:sm-shannon-chain-rule} and \eqref{eq:sm-general-renyi-refinement}
therefore reduce to
\begin{equation}
  \Delta S^{(1)}
  =
  p_0h_2(r),
  \qquad
  \Delta S^{(n)}
  =
  \frac{1}{1-n}
  \log
  \frac{
  A_n+p_0^n\left[r^n+(1-r)^n\right]
  }{
  A_n+p_0^n
  },
  \label{eq:sm-product-kw-general-reduction}
\end{equation}
where \(A_n=\sum_{\chi\ne0}p_\chi^n\).  Similarly,
\begin{equation}
  D_{\rm TV}^{\rm tube}(r,r')
  =
  p_0|r-r'|,
  \qquad
  D_{\rm TV}^{G}=0.
  \label{eq:sm-product-kw-tv-reduction}
\end{equation}
This is the abstract reason why the coherent endpoint and the unread
endpoint-sign instrument can have identical scalar and \(G\)-only diagnostics
while remaining distinguishable by the product-KW tube readout.

\section{Projected-Spectrum and Finite-Size Robustness}
\label{sec:sm-projected-spectrum}

The projected-spectrum test checks more than the linear tube weights.  It asks
whether the endpoint prescription block diagonalizes the positive overlap
spectrum:
\begin{equation}
  \frac{Z_{\rm off}^{(4)}}{Z^{(4)}}
  =
  1-\frac{\sum_\alpha Z_\alpha^{(4)}}{\Tr O_A^2}.
  \label{eq:sm-off-tube}
\end{equation}
Here \(Z^{(4)}=\Tr O_A^2\), \(Z_\alpha^{(4)}\) is the fourth moment retained by
the projected tube sector \(\alpha\), and \(Z_{\rm off}^{(4)}\) is the
corresponding off-tube spectral fraction.
For the physical endpoint the expected identities are
\begin{equation}
  Z_{N_+}^{(2k)}
  =
  \Tr\!\left[(P_{++}O_{\rm source}P_{++})^k\right],
  \qquad
  Z_{N_-}^{(2k)}=0,
  \qquad k=1,2.
  \label{eq:sm-physical-projected}
\end{equation}
The finite-size and endpoint-size scan covers
\((L,L_A)=(5,3),(6,3),(7,3),(6,4),(7,5)\).  The largest observed residuals are
\begin{equation}
\begin{aligned}
  \max |Z_{N_+}^{(2)}-Z_{P_{++}}^{(2)}| &= 3.33\times10^{-16},\\
  \max |Z_{N_+}^{(4)}-Z_{P_{++}}^{(4)}| &= 2.78\times10^{-16},\\
  \max Z_{\rm off}^{(4)}/Z^{(4)} &= 1.72\times10^{-15}.
\end{aligned}
\end{equation}
These residuals are the numerical falsifiability criteria for the
free-endpoint prescription.  Keeping the boundary-module identity fixed, a
stable nonzero \(N_-\) projected weight or a nonzero thermodynamic trend in
\(Z_{\rm off}^{(4)}/Z^{(4)}\) would invalidate the rectangular-isometry
endpoint.  The comparison same-Hilbert closures fail
precisely this stronger projected-spectrum test.

\begin{figure}[t]
\centering
\includegraphics[width=0.86\textwidth]{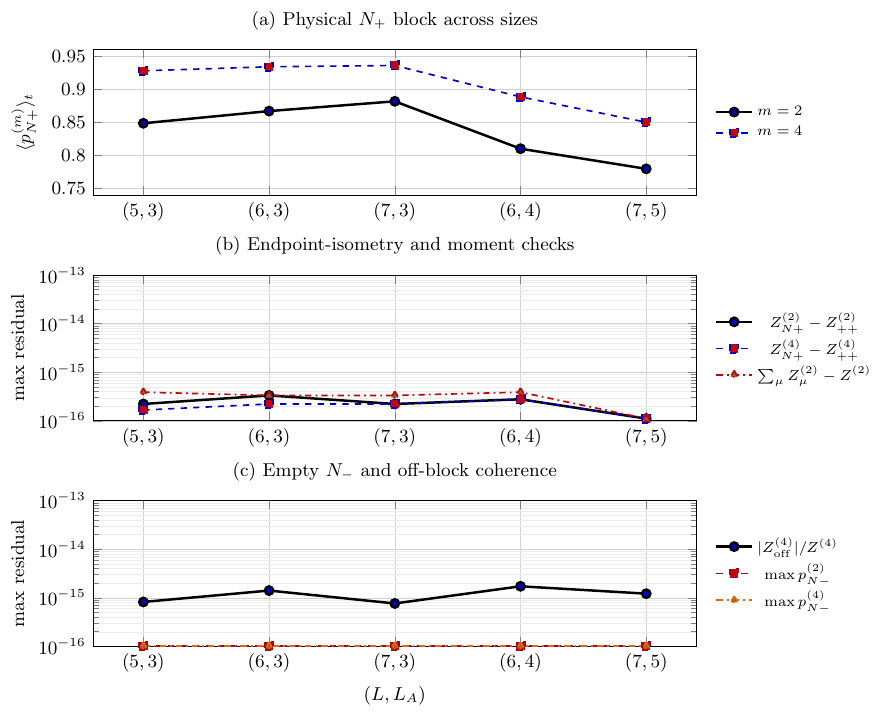}
\caption{Appendix finite-size and endpoint-size robustness of the physical
free-endpoint projected spectrum.  The physical endpoint keeps \(N_-\) empty
and removes the \(m=4\) off-tube fraction to numerical precision across the
tested cases.}
\label{fig:sm-physical-endpoint-robustness}
\end{figure}

The full \((L,L_A)=(7,5)\), 61-point production curve was generated with a
Schmidt-factor implementation.  Writing \(O_A(t)=B_tB_t^\dagger\), the projected
moments are computed from \(B_t^\dagger P_\alpha B_t\), reducing the spectral
step from the \(2^{10}\)-dimensional subsystem space to rank
\(2^{2(L-L_A)}=16\) projected Gram matrices.

\begin{figure}[t]
\centering
\includegraphics[width=0.86\textwidth]{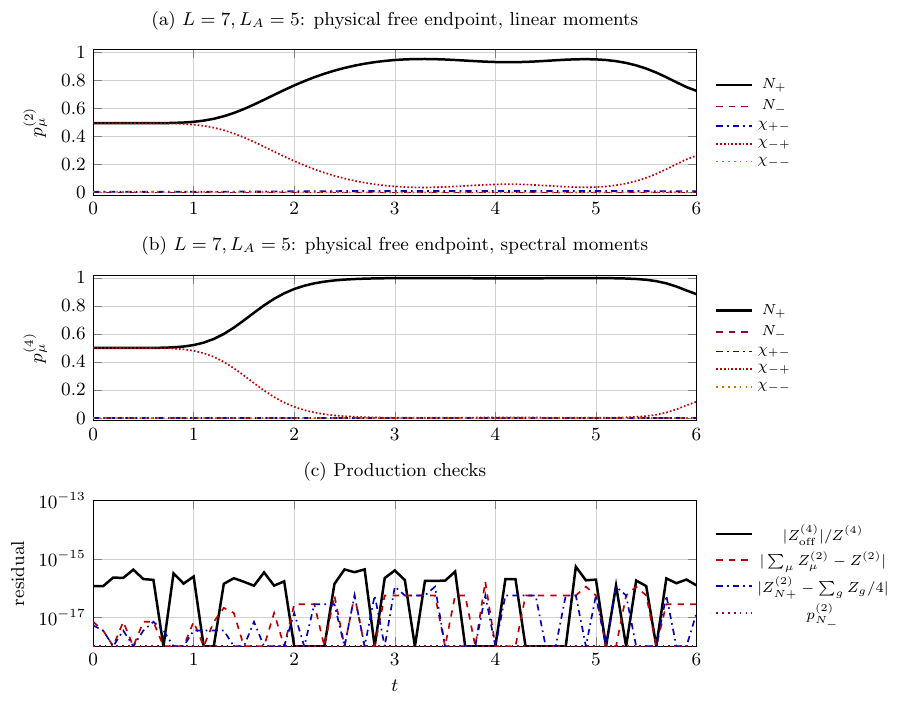}
\caption{Appendix 61-point production curve for the
\((L,L_A)=(7,5)\) physical free endpoint.  The optimized implementation is
cross-checked against a dense reference calculation and preserves the
projected-spectrum identities to roundoff.}
\label{fig:sm-la5-production}
\end{figure}

\section{Control Diagnostics}

This section collects checks that rule out simpler or misleading
interpretations.

\subsection{Projector-Hierarchy Control: Naive and Same-Hilbert Product-KW Closures}

The \(G\)-only resolution is a valid coarse graining but is endpoint blind.  A
square same-Hilbert product-KW closure goes further and can satisfy the fusion
algebra, but it produces an unphysical \(N_-\) branch and finite \(m=4\)
off-tube coherence.  The boundary-tube/free-endpoint construction repairs both
issues.  This is the control hierarchy for the endpoint construction: the same
linear fusion checks that make the square closure tempting are not sufficient
to validate a physical endpoint prescription.

\begin{figure}[t]
\centering
\includegraphics[width=0.86\textwidth]{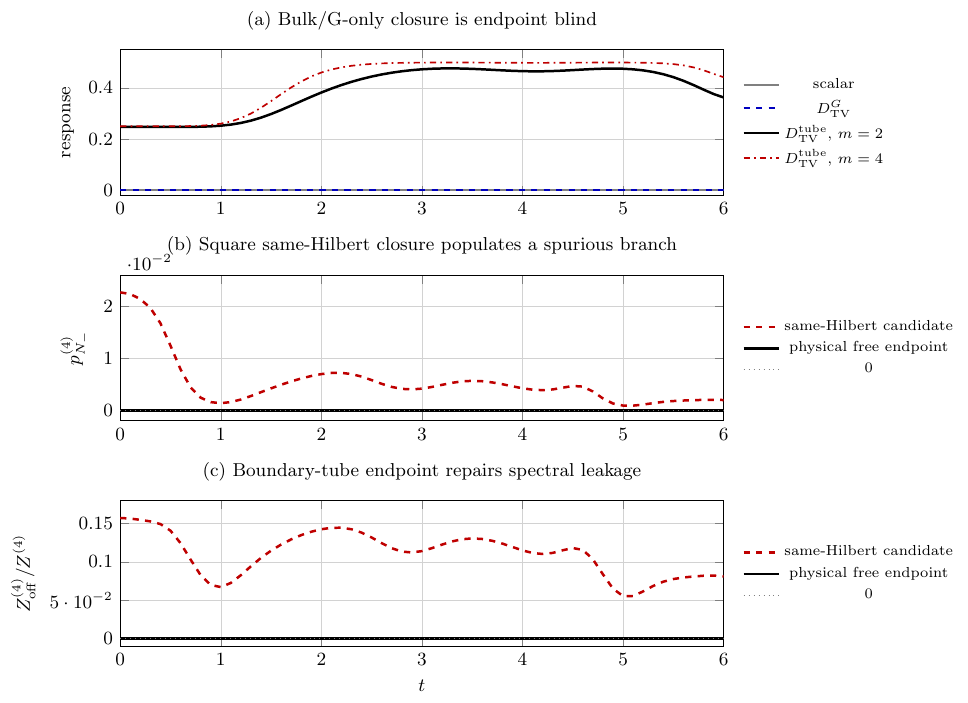}
\caption{Control hierarchy for naive, same-Hilbert, and physical free-endpoint
resolutions.}
\label{fig:sm-naive-repair}
\end{figure}

\subsection{Positive-Overlap Complement Anomaly}

The positive overlap operator is not a transition matrix.  For different pure
states, \(\Tr(\rho_A^\psi\rho_A^\phi)\) need not equal
\(\Tr(\rho_{\bar A}^\psi\rho_{\bar A}^\phi)\).  Complement mismatch is
therefore a property of the positive-overlap diagnostic, not a failure of the
tube projectors.  The tested cases remain internally normalized and block
diagonal to roundoff.

The robustness scan covers eight finite \((L,L_A,\mathrm{copy})\) cases and
records scalar, sector, and off-block residuals.
The provenance table below lists the script and summary output used for this
check.

\begin{figure}[t]
\centering
\includegraphics[width=0.86\textwidth]{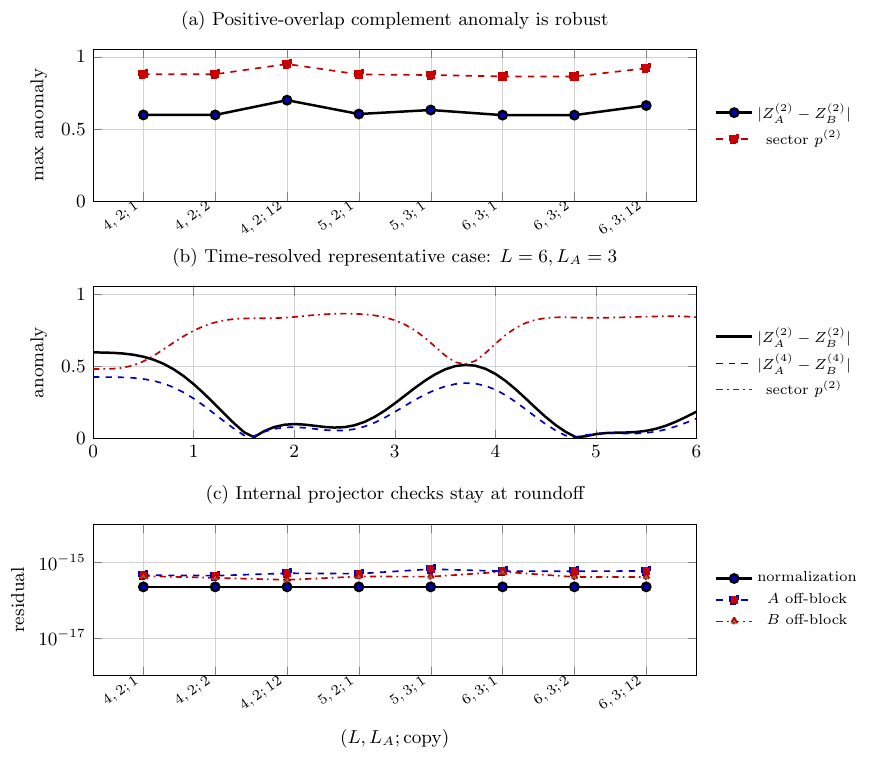}
\caption{Complement anomaly of the positive-overlap diagnostic.  Internal
projector checks remain at roundoff even though scalar and sector-resolved
\(A/\bar A\) complement identities fail.}
\label{fig:sm-complement-anomaly}
\end{figure}

\section{Local-Quench CFT Support and Data Provenance}

All doubled-Ising and CFT computations in this section serve as benchmark
realizations and consistency checks of the variational readout principle; they
are not assumptions in its derivation.

\subsection{CFT Formula for the Product-KW Source Weight}

The local-quench plateau has two logically separate
pieces.  The continuum CFT piece is a boundary-CFT block decomposition of the
replica correlator~\cite{Calabrese:2004eu,Calabrese:2005in,Calabrese:2007mtj,Nozaki:2014hna,He:2014mwa,Guo:2018lzz}.
The microscopic piece is the map from the finite lattice
time \(t\) to the continuum cross-ratios \(x(t),\bar x(t)\).

This section supplies the support for the claim that the plateau is
controlled by conformal-block powers rather than by an independent numerical
fit.

For a local primary \(\Phi\), let \(W_a(x,\bar x;\Phi)\) denote the positive
weight of boundary channel \(a\) in the positive-overlap replica geometry.
In the continuum rational conformal field theory (RCFT) block resolution, and
when the projected readout is
diagonal in the relevant boundary-channel basis, a categorical projector
selecting a set \(\mathcal I\) of boundary channels gives normalized projected
even moments of the channel-power form
\begin{equation}
  p_{\mathcal I}^{(2k)}(t)
  =
  \frac{\sum_{a\in\mathcal I}W_a(x(t),\bar x(t);\Phi)^k}
       {\sum_b W_b(x(t),\bar x(t);\Phi)^k}.
  \label{eq:sm-cft-channel-power-general}
\end{equation}
The denominator sums over all boundary channels \(b\) in the same replica
geometry.  The finite-size projected-spectrum tests reported below check that
off-channel leakage is negligible for the microscopic implementation used in
the numerical comparison.
For \(k=1\) this is the ordinary block-weight ratio; for \(k>1\) it is the
same channel decomposition applied to the projected positive-overlap spectrum.
Thus the \(m=4\) plateau sharpening is a spectral statement: the dominant
boundary block is squared relative to the subdominant block.

For the Ising spin local quench at the free entangling boundary,
\(\sigma\times\sigma=1+\psi\).  Up to a common prefactor, the two spin
conformal blocks are
\begin{equation}
  \mathcal F_{1,\psi}^{\sigma}(x)
  \propto
  \sqrt{1\pm\sqrt{1-x}},
  \label{eq:sm-ising-spin-blocks}
\end{equation}
so the real block coordinate gives
\begin{equation}
  p_+^{(2)}(x)=\frac{1+\sqrt{1-x}}2.
  \label{eq:sm-pplus-m2}
\end{equation}
The identity-channel member of Eq.~\eqref{eq:sm-cft-channel-power-general}
therefore gives
\begin{equation}
  p_+^{(2k)}(x)
  =
  \frac{W_1(x)^k}{W_1(x)^k+W_\psi(x)^k}.
  \label{eq:sm-pplus-family}
\end{equation}
For \(m=4\),
\begin{equation}
  p_+^{(4)}
  =
  \frac{(p_+^{(2)})^2}{(p_+^{(2)})^2+(1-p_+^{(2)})^2}.
  \label{eq:sm-pplus-m4}
\end{equation}

In the doubled-Ising production data the excitation is in copy 1, while copy 2
is a spectator.  Let \(p_{+,j}^{(2k)}\) denote the positive, or identity-block,
weight in Ising copy \(j=1,2\).  The \(G\)-trivial source weight factorizes as
\begin{equation}
  p_0^{(2k)}(t)
  =
  p_{+,1}^{(2k)}(t)\,p_{+,2}^{(2k)}(t)
  \equiv
  \kappa_{2k}(t)\,
  \mathcal F_{\rm CFT}^{(2k)}(x(t),\bar x(t);\sigma),
  \label{eq:sm-p0-cft-factorized}
\end{equation}
where \(\kappa_{2k}=p_{+,2}^{(2k)}\) is the measured spectator-copy factor.
Equivalently, in the columns of the comma-separated values (CSV) files used for
the main figure,
\(\kappa_{2}=p_{N_+}^{(2)}+p_{\chi_{-+}}^{(2)}\) and similarly for
\(m=4\), where \(\chi_{-+}\) is the \(G\)-character sector with eigenvalues
\((- ,+)\) under the two generators \(a\) and \(b\).  In the continuum
spectator limit \(\kappa_{2k}\to1\), and
\(p_0^{(2k)}\) is precisely the CFT function
\(\mathcal F_{\rm CFT}^{(2k)}(x,\bar x;\sigma)\).

The full \((L,L_A)=(7,5)\), 61-point curve gives a maximum block-power error
of \(8.3\times10^{-7}\).

\subsection{Microscopic Two-Front Positive-Overlap Geometry}

Equation~\eqref{eq:sm-pplus-m2} can be inverted directly.  After dividing out
the spectator Ising copy, the measured \(m=2\) product-KW endpoint weight gives
the effective CFT block coordinate
\begin{equation}
  x_{\rm eff}(t)=1-\left(2p_+^{(2)}(t)-1\right)^2 .
  \label{eq:sm-xeff}
\end{equation}
This is not an additional fit parameter.  It is the cross-ratio coordinate
that the CFT formula would assign to the observed \(m=2\) boundary-channel
weight.  The microscopic test is therefore whether the independently expected
local-quench geometry produces the same \(x_{\rm eff}(t)\).

For the finite open-chain interval a single half-line cross-ratio is not
adequate.  The positive-overlap geometry has two relevant cut-crossing events:
the direct right-moving front reaches the entangling cut, and the
left-moving/reflected front reaches it after reflection from the physical
boundary.  A minimal branch-selected two-front coordinate is
\begin{equation}
  x_{\rm micro}(t)
  =
  1-H_\epsilon(t-t_1)H_\epsilon(t_2-t),
  \qquad
  H_\epsilon(y)
  =
  \frac12\left(1+\frac{y}{\sqrt{y^2+\epsilon^2}}\right),
  \label{eq:sm-xmicro}
\end{equation}
with
\begin{equation}
  t_1=\frac{a-x_0}{v_R},
  \qquad
  t_2=\frac{a+x_0}{v_L}.
\end{equation}
Here \(a\) is the entangling-cut position, \(x_0\) is the local excitation
position, \(v_R\) and \(v_L\) are the right- and left-moving front velocities,
and \(\epsilon\) is the smoothing width in \(H_\epsilon\).  The coordinate is
close to \(x=1\) before the first crossing, falls
toward the identity-block regime \(x\simeq0\) between the two crossings, and
returns after the reflected crossing.  This is precisely the spacetime pattern
needed for Eq.~\eqref{eq:sm-pplus-family}: the identity block dominates in the
post-front window, and the \(k=2\) channel power drives \(p_0^{(4)}\) much
closer to one than \(p_0^{(2)}\).

The current fit gives \(v_R=1.4301729843\), \(v_L=1.095311913\), and
\(\epsilon=1.12396396396\).  The \(L_A\ge4\) subset gives the intended
continuum trend, while \(L_A=3\) shows visible finite-size and image effects.
The all-case root-mean-square (RMS) residual is \(0.2630\), while the
\(L_A\ge4\) subset improves to \(0.1015\).  For this reason the microscopic
\(x(t)\) collapse is used as a support check on the conformal-block mechanism,
not as the leading universality claim.

\begin{figure}[t]
\centering
\includegraphics[width=0.86\textwidth]{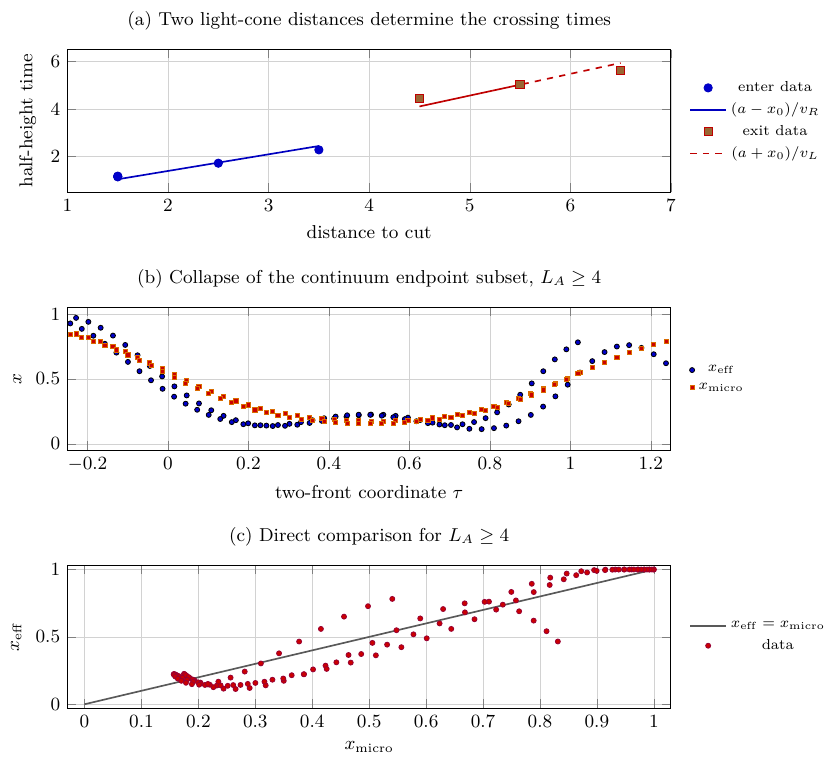}
\caption{Appendix microscopic positive-overlap geometry.  The measured
\(m=2\) endpoint weight is converted to the CFT block coordinate
\(x_{\rm eff}=1-(2p_+^{(2)}-1)^2\).  The two-front coordinate captures the
direct and reflected entangling-cut crossings that drive this coordinate into
the identity-block regime.  The \(L_A\ge4\) data show the intended continuum
trend; the \(L_A=3\) cases are retained as finite-size stress tests.}
\label{fig:sm-microscopic-x}
\end{figure}

\clearpage
\bibliographystyle{apsrev4-2}
\bibliography{references}

\end{document}